\newtheorem{fact}{Fact}
\newtheorem{theorem}{Theorem}
\newtheorem{definition}{Definition}
\begin{document}

\title{Irregular Fractional Repetition Code Optimization for Heterogeneous Cloud Storage}


\author{Quan Yu,
        Chi Wan Sung,~\IEEEmembership{Member,~IEEE},
        and Terence H. Chan,~\IEEEmembership{Member,~IEEE}
\thanks{Manuscript received May 15, 2013; revised September 30, 2013 and November 20, 2013. This paper was presented in part at the IEEE International Conference on Communications (ICC), Ottawa, Canada, June 2012. This work was partially supported by a grant from the University Grants Committee of the Hong Kong Special Administrative Region, China (Project No. AoE/E-02/08).}
\thanks{Quan Yu is with Department of Electronic Engineering, City University of Hong Kong, Hong Kong (e-mail: Q.Yu@my.cityu.edu.hk).}
\thanks{Chi Wan Sung is with Department of Electronic Engineering, City University of Hong Kong, Hong Kong (e-mail: albert.sung@cityu.edu.hk).}
\thanks{Terence H. Chan is with Institute for Telecommunications Research, University of South Australia, Adelaide, SA 5095, Australia (e-mail: terence.chan@unisa.edu.au).}}
\date{}

\pagestyle{empty}
\thispagestyle{empty}


%



\maketitle

\pagestyle{empty}
\thispagestyle{empty}

\begin{abstract}
This paper presents a flexible irregular model for heterogeneous cloud storage systems and investigates how the cost of repairing failed nodes can be minimized. The fractional repetition code, originally designed for minimizing repair bandwidth for homogeneous storage systems, is generalized to the irregular fractional repetition code, which is adaptable to heterogeneous environments. The code structure and the associated storage allocation can be obtained by solving an integer linear programming problem. For moderate sized networks, a heuristic algorithm is proposed and shown to be near-optimal by computer simulations.
\end{abstract}

\begin{keywords}
Cloud Storage,  Distributed Storage Systems, Irregular Fractional Repetition Code, Regenerating Code
\end{keywords}


\IEEEpeerreviewmaketitle

\section{Introduction} \label{sec:introduction}

\PARstart{C}{loud} storage is a new paradigm of storing data. It allows users to access data anywhere and anytime. Companies such as Google and Apple are providing this service through their data centers, which are network-connected. Such an architecture is called the distributed storage system (DSS). Storage nodes in a DSS are generally unreliable and subject to failure. When a failure occurs, a newcomer needs to {\em repair} the lost data by retrieving data from surviving storage nodes, called helper nodes, so as to maintain the {\em reliability} of the DSS. Besides, the DSS should be able to provide data {\em availability}, which allows users to access their data anywhere and with low delay.

To provide reliability and availability, erasure codes such as replication or Reed-Solomon (RS) code are commonly used. While replication requires less network bandwidth during node repair, RS code is more efficient in terms of storage space. In 2007, Dimakis et al. showed that there is a fundamental tradeoff between storage space and repair bandwidth~\cite{DGWR07}. Points on the tradeoff curve can be achieved by a class of codes called {\em regenerating codes}, which is based on the concept of network coding. In their formulation, a newcomer is able to recover the lost data by connecting to any $d$ surviving storage nodes, and a data collector is able to retrieve the data object by downloading data from any $k$ out of the $n$ storage nodes. We call this distributed storage model the {\em regular} model. Since then, many codes that achieve points on the tradeoff curve have been constructed (e.g.~\cite{RSKumar2011,CSKR10,SRKR10c,GRWS11}).

The design rationale of regenerating codes is to minimize repair bandwidth. These codes, however, generally incur high disk I/O access during repair, since helper nodes need to read its stored data and linearly combine them to form packets to be sent to a newcomer. The stored data that needs to be read is often much more than the data to be sent to the newcomer. The disk access bandwidth thus becomes the bottleneck. In~\cite{ITXYWJB2013}, the repair problem is considered in a different way. It aims to minimize the amount of information to be accessed when the number of node failures is smaller than the erasure correcting capability of an MDS code. Another approach is considered in~\cite{ElRouayheb10}. It proposes a new code formed by concatenating an outer MDS code with an inner fractional repetition (FR) code. We call it MDS-FR code. This code is a minimum bandwidth regenerating (MBR) code, which means that it minimizes the repair bandwidth of the system. Furthermore, it has the nice {\em uncoded repair} property: a helper node only needs to read the exact amount of data that it needs to forward to the newcomer without any processing. In other words, it minimizes both repair bandwidth and disk access bandwidth at the same time. While the original construction of the FR code in~\cite{ElRouayheb10} is based on regular graph and Steiner system, other constructions exist, which are based on bipartite graph~\cite{JJ11}, randomized algorithm~\cite{SNSK11}, resolvable designs~\cite{OOAR12}, and incidence matrix~\cite{SMKT13}. Note that the above mentioned works do not strictly follow the regular model, as they have different design considerations in mind. Another notable example is the locally repairable code~\cite{DSPAGD12,PGCHHSSY12,FOAD11}, which aims at reducing the number of nodes that need to be contacted during repair.

In this paper, we focus on heterogeneous distributed storage systems. Examples include heterogenous data centers, peer-to-peer cloud storage systems (e.g. Space Monkey)~\cite{Oceanstore}, peer-assisted cloud storage systems, and some wired or wireless caching systems~\cite{SSHKK11,NAA12}. In these applications, the storage nodes and the network links are {\em heterogeneous}, meaning that the storage capacities and costs associated with different storage nodes may not be the same, and the communication links between each pair of storage nodes may have different characteristics in terms of bandwidth, communication cost, and transmission rate. Furthermore, it is also possible that some storage nodes are not directly connected. In such an environment, new issues arise. The storage allocation problem, which focuses on how to allocate a given storage budget over the storage nodes such that the probability of successful recovery is maximized, is studied in~\cite{LDH12}. A distributed storage system in which the storage nodes have different download costs is considered in~\cite{AKG10}. In a distributed storage system with storage cost, how to allocate storage capacities among the storage nodes so as to minimize the total storage cost is investigated in~\cite{YSS11}. In~\cite{LYWL10}, the bandwidth heterogeneity is taken into account to demonstrate that the tree-structured regeneration topology is an efficient topology to reduce the regeneration time. Under functional repair model, the link costs and the impact of network topology are jointly considered in~\cite{GXS11}, and an information-theoretic study is performed in~\cite{TSCH13}.

To address the design issues of heterogenous cloud storage systems, we set up a flexible model, called the {\em irregular} model, in which the underlying network topology can be arbitrary, the storage capacities and costs of different storage nodes are allowed to be different, and the bandwidth and costs of communication links need not be the same. We relax the constraints of data repair and data retrieval in the regular model by introducing the concepts of repair overlay and retrieval sets. We use the term repair overlay to refer to the structure of an overlay network for data repairing. Note that it is called repair table in~\cite{ElRouayheb10}. In the work of~\cite{ElRouayheb10}, for single failure case, the repair overlay is restricted to be a regular graph with each vertex having degree $d$, and the graph is randomly generated. In this paper, we do not restrict the repair overlay to be a regular graph. For the general case of multiple failures, the repair overlay in~\cite{ElRouayheb10} is a Steiner system. However, the existence of a Steiner system requires the system parameters to satisfy some specific conditions, which makes the system design inflexible. In this paper, hypergraph is used to model the repair overlay, which exists for arbitrary system parameters and can be constructed easily compared with Steiner system.

Recall that the code used in~\cite{ElRouayheb10} is a concatenation of an outer MDS code and an inner FR code. We call this construction the MDS-FR code. We extend the idea and propose the use of the {\em Irregular Fractional Repetition} (IFR) code as the inner code. While it preserves the desirable uncoded repair property, it further allows more flexibility in system design. When the distributed storage system and the underlying network is heterogeneous, the IFR code can be constructed and adapted to the given environment by solving an optimization problem, thus further reducing repair bandwidth. In our formulation, we minimize the system repair cost by properly choosing the MDS-IFR code. The problem is shown to be an integer linear programming (ILP) problem. When the number of storage nodes is small, the optimal solution can be found in a reasonable time. For larger networks, we decompose the problem into subproblems and propose a heuristic solution. For small network sizes, our heuristic is shown to be nearly optimal by comparing it with the optimal ILP method.

The rest of the paper is organized as follows. A motivating example is given in Section~\ref{sec:motivation}. Section~\ref{sec:model} states our irregular model for distributed storage systems. In Section~\ref{sec:code}, we describe the construction of MDS-IFR code and its relationship with the concept of relay overlay. In Section~\ref{sec:formulation}, we formulate the repair cost minimization problem as an integer linear problem (ILP). In Section~\ref{sec:tradeoff}, we describe how the storage-repair tradeoff of our code can be found. In Section~\ref{sec:heuristic}, we design heuristic algorithms to find suboptimal repair overlay and retrieval sets. Section~\ref{sec:simulation} provides our simulation results. We conclude the paper in Section~\ref{sec:conclusion}.

\section{A Motivating Example} \label{sec:motivation}

The regular model assumes that a newcomer is able to replace a failed storage node by contacting any $d$ surviving storage nodes and a data collector can retrieve the stored data object by downloading data from any $k$ out of the $n$ storage nodes. In some practical scenarios, however, the communication costs between a newcomer and each of the surviving storage nodes are different. Furthermore, the distances and transmission rates between a data collector and each of the $n$ storage nodes vary with the location of the data collector. The $d$ surviving nodes to be contacted by a newcomer and the $k$ storage nodes to be contacted by a data collector need not be arbitrary. It is reasonable to determine some sets of helper nodes, called {\em helper sets}, for a newcomer and some subsets of the $n$ storage nodes, called {\em retrieval sets}, for a data collector. The collection of helper sets of all the $n$ storage nodes defines the repair overlay. Thus, we modify data repair and data retrieval mechanisms based on the concepts of repair overlay and retrieval sets. We only require that a newcomer can rebuild the corresponding failed node by contacting the storage nodes in any one of its helper sets and a data collector can retrieve the data object by contacting the storage nodes in any one of the retrieval sets.

Consider a distributed storage system that can tolerate single failures with the following parameters: $n=6$, and $d=k=2$. A data object consisting of four packets would be stored in this distributed storage system. For the regular model, the corresponding tradeoff between storage amount and repair bandwidth under functional repair is shown in Fig.~\ref{fi:ringtradeoff}, where the feasible region is shown as the shaded area. Note that all points on the tradeoff curve are normalized by the number of packets contained in the data object. The points below the tradeoff curve are impossible to achieve by functional repair. Clearly, they cannot be achieved by exact repair either.

\begin{figure}
  \centering
  \scalebox{0.5}{\includegraphics{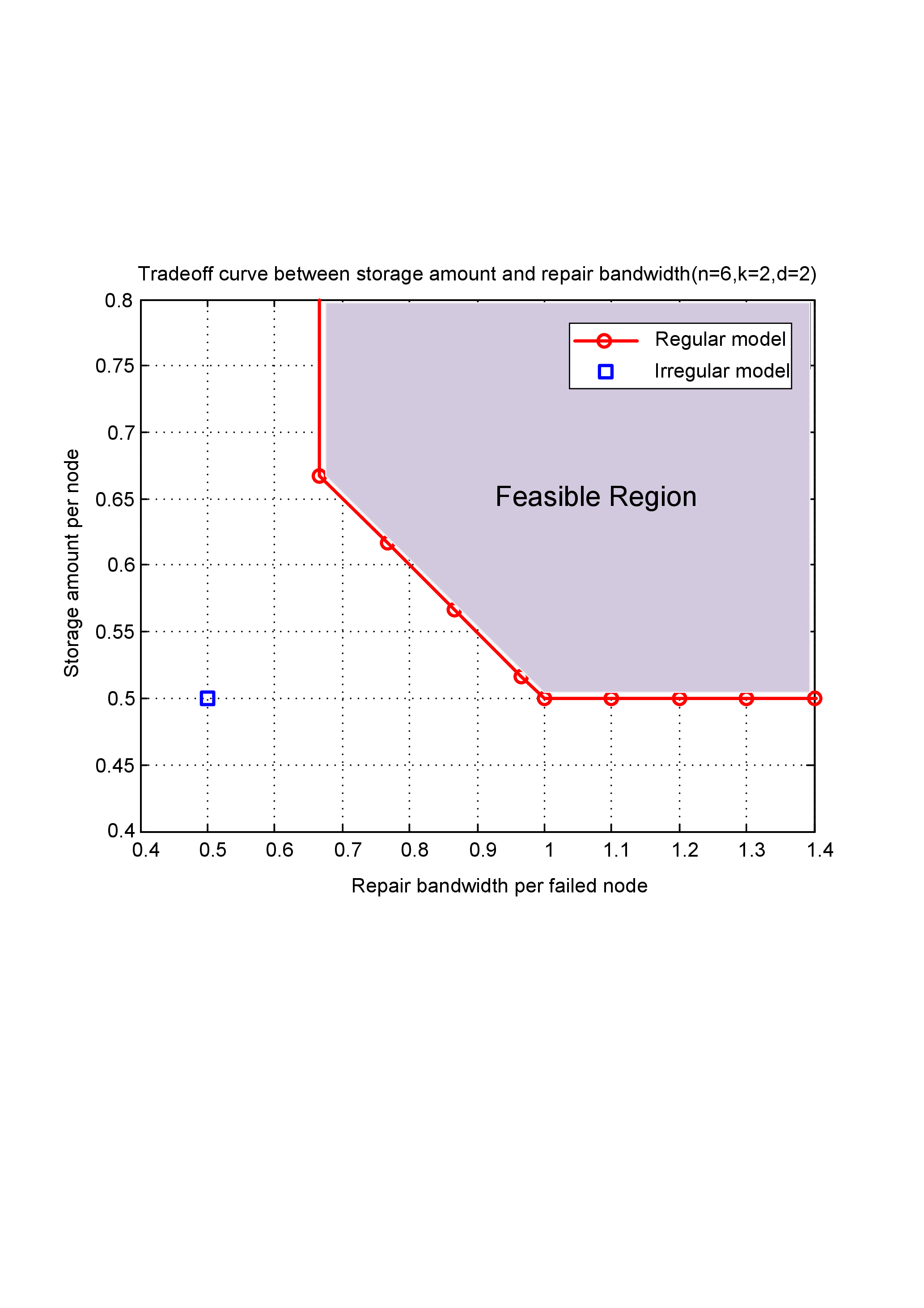}}
  \caption{Tradeoff between storage amount and repair bandwidth ($n=6, d=2$, and $k=2$).}
  \label{fi:ringtradeoff}
\end{figure}

Now we consider the irregular model, which includes the concepts of repair overlay and retrieval sets. We require that the number of retrieval sets are large enough. In this example, we require that there should be at least $9$ retrieval sets. Assume that the chosen repair overlay, denoted by $\tau$, is a ring with six nodes, as shown in Fig.~\ref{fi:ringexample}(a) (solid lines). We show how to construct MDS-FR code based on the repair overlay. The data object consisting of four packets are first encoded into six packets, $F_1$, $F_2$, \ldots, $F_6$ by a $(6,4)$-MDS code. Each edge in the 6-node ring is then associated with a coded packet. Each node stores the two packets that are associated with its incident edges, as shown in Fig.~\ref{fi:ringexample}(a). Thus, the storage amount of each of the six storage nodes is $2$. In this example, each storage node has one helper set and a newcomer can recover the lost data by connecting to $d=2$ nodes in its helper set, i.e., its two neighboring nodes in the ring, rather than {\em any} $d=2$ surviving nodes. Since each newcomer downloads one packet from each of its two helper nodes to recover the lost data, the repair bandwidth of a failed node is $2$. Suppose node 3 fails. A newcomer can replace it by downloading coded packets $F_2$ and $F_3$ from its two helper nodes 2 and 4, respectively, as shown in Fig.~\ref{fi:ringexample}(b). As for data retrieval, we can have nine retrieval sets of cardinality $k=2$, which are listed as $R_1=\{1, 3\}$, $R_2=\{1, 4\}$, $R_3=\{1, 5\}$, $R_4=\{2, 4\}$, $R_5=\{2, 5\}$, $R_6=\{2, 6\}$, $R_7=\{3, 5\}$, $R_8=\{3, 6\}$ and $R_9=\{4, 6\}$. A data collector can reconstruct the data object by connecting to the $k=2$ storage nodes in any one of the nine retrieval sets. After normalizing by the number of packets of the original data object, the storage amount of each storage node is $0.5$ and the repair bandwidth per failed node is also $0.5$. This point is also plotted in Fig.~\ref{fi:ringtradeoff}, which is below the tradeoff curve of the regular model. From Fig.~\ref{fi:ringtradeoff}, we can see that with the same storage amount per node, the repair bandwidth is reduced by $50\%$, which shows that the potential gain can be enormous {\em if the constraints of data repair and data retrieval are relaxed}.

\begin{figure}
  \centering
  \scalebox{0.45}{\includegraphics{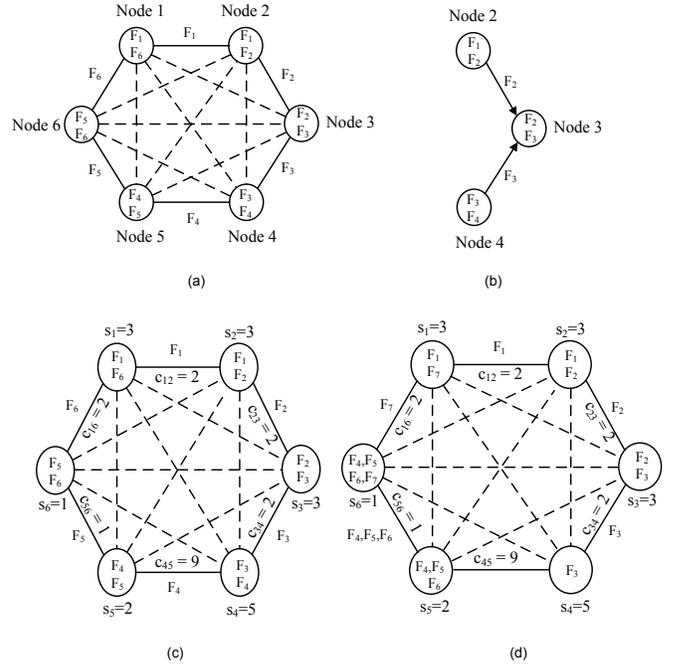}}
  \caption{An example of constructing MDS-FR code and MDS-IFR code for the irregular model.}
  \label{fi:ringexample}
\end{figure}

In the irregular model, different storage nodes are allowed to have different storage costs and different links are allowed to have different communication costs. An irregular model with storage cost and communication cost is given in Fig.~\ref{fi:ringexample}(c), where node $i$ has a (per-packet) storage cost $s_i$ and the link connecting node $i$ and node $j$ has a (per-packet) communication cost $c_{ij}$. For both FR code and IFR code, we assume that the number of packets assigned to edge $\{i, j\} \in \tau$ is $\beta_{ij}$ and the number of packets stored in node $i$ is $\alpha_i$. The total storage cost can then be obtained as $\sum_{i=1}^{6}\alpha_is_i$ and the total repair cost of all possible single node failures can be calculated as $\sum_{i=1}^{6}\sum_{\{i,j\}\in \tau}c_{ij}\beta_{ij}$. If we use the same MDS-FR code as before, the total storage cost of the six storage nodes is $2\sum_{i=1}^{6}s_i=34$ since each node stores $2$ packets, and the total repair cost of all possible single node failures can be calculated as $(c_{12}+c_{16})+(c_{12}+c_{23})+(c_{23}+c_{34})+(c_{34}+c_{45})+(c_{45}+c_{56})+(c_{56}+c_{16})=36$, where the six components of the summation correspond to the repair costs of node 1 to node 6, respectively. If we use MDS-IFR code, we first encode the data object into seven packets by using a $(7,4)$-MDS code. Then we assign coded packet $F_1$ to edge $\{1, 2\}$, $F_2$ to edge $\{2, 3\}$, $F_3$ to edge $\{3, 4\}$, $F_4$, $F_5$, and $F_6$ to edge $\{5, 6\}$, and $F_7$ to edge $\{1, 6\}$. Each node then stores the packets associated with its incident edges, as shown in Fig.~\ref{fi:ringexample}(d). In this example, a newcomer can recover the lost data by connecting to only one node or to two nodes, depending on which node is failed. This contrasts with the MDS-FR code, in which a newcomer always connects to {\em exactly} $d$ nodes. For example, if node 4 fails, a newcomer can replace it by downloading packet $F_3$ from node 3. As for data retrieval, we can have ten retrieval sets of cardinality $k=2$, which are listed as $R_1=\{1, 3\}$, $R_2=\{1, 5\}$, $R_3=\{1, 6\}$, $R_4=\{2, 5\}$, $R_5=\{2, 6\}$, $R_6=\{3, 5\}$, $R_7=\{3, 6\}$, $R_8=\{4, 5\}$, $R_9=\{4, 6\}$ and $R_{10}=\{5, 6\}$. The total storage cost of the six nodes is $2s_1+2s_2+2s_3+s_4+3s_5+4s_6=33$ and the total repair cost of all possible single node failures can be calculated as $(c_{12}+c_{16})+(c_{12}+c_{23})+(c_{23}+c_{34})+c_{34}+3c_{56}+(3c_{56}+c_{16})=22$, where the six components of the summation corresponding to the repair costs of node 1 to node 6, respectively. Compared with MDS-FR code, we can see that both storage cost and repair cost can be reduced if MDS-IFR code is adopted in the irregular model. Although the retrieval sets in the two cases are different, in the latter case, more retrieval sets are provided, which is often more desirable. Should exactly the same number of retrieval sets are needed for a fairer comparison, one can simply remove one of the retrieval sets for the latter case, as that would not affect the storage and repair costs.

In the above example, we show that there can be large performance gain in designing distributed storage systems. However, the result should be interpreted with caution. We do not claim that MDS-IFR code outperforms well-known regenerating code and MDS-FR code under their respective problem settings. In fact, they are known to be optimal under their respective problem definitions. Instead, the example serves two purposes. First, it justifies the setup of the irregular model, which is more appropriate for heterogeneous cloud storage systems. Second, it explains the irregular model and the MDS-IFR code in an intuitive way, which facilitates the understanding of the next two sections, which formally define these concepts.

\section{System Model} \label{sec:model}

Consider a distributed storage network, in which $n$ storage nodes are distributed across a wide geographical area and connected by a network with a specific topology. A data object is encoded and distributed among the $n$ storage nodes. Let the data object be represented by a collection of $B$ packets, where each packet is an element drawn from a finite field GF($q$) of size $q$. Note that a packet is the minimum unit for all storage and transmission operations in a storage system.

\subsection{Storage and Communication Costs}

We model the underlying storage network as a connected weighted undirected graph $\tilde{G} = (\mathcal{V},\tilde{\mathcal{E}})$, where the storage nodes are vertices in the vertex set $\mathcal{V}$ and the communication links correspond to the edges in the edge set $\tilde{\mathcal{E}}$. Throughout this paper, we assume that $\mathcal{V} \triangleq \{1, 2, \ldots, n\}$. Each vertex $i \in \mathcal{V}$ has an associated storage cost $s_i$ indicating the cost of storing a packet in node $i$. We define the storage cost vector ${\bf s} \triangleq [s_1,s_2,...,s_n]$. Besides, each edge $ \tilde{e} = \{i,j\} \in \tilde{\mathcal{E}}$ connecting vertices $i$ and $j$ ($i\neq j$) has an associated weight $\tilde{c}_{ij}$, called single-hop cost, which represents the cost of transmitting a packet along this edge. If there is no direct communication link between two vertices, we let the corresponding single-hop cost be infinite. The cost to transmit a packet from vertex~$i$ to vertex~$j$ is called the communication cost and is denoted by $c_{ij}$. The values of $c_{ij}$'s can be obtained from $\tilde{c}_{ij}$, depending on the underlying communication assumptions. For example, if multi-hop transmissions are allowed, then $c_{ij}$ can be defined as the cost of the minimum-cost path from~$i$ to~$j$, where the cost of a path is the sum of the single-hop costs of its constituent edges. If only single-hop transmissions are allowed, then $c_{ij}$ equals $\tilde{c}_{ij}$ for all $i$ and $j$. The matrix $\tilde{\bf C}=[\tilde{c}_{ij}]$ is called the single-hop cost matrix, and the matrix ${\bf C}=[c_{ij}]$ is called the communication cost matrix. Note that both $\tilde{\bf C}$ and ${\bf C}$ are symmetric.

In this paper, multi-hop transmissions are allowed in the underlying storage network. Since the storage network is assumed to be connected, we can construct a complete weighted graph $G = (\mathcal{V}, \mathcal{E})$ on the vertex set $\mathcal{V}$, where the weight of an edge $e = \{i,j\} \in \mathcal{E}$ is equal to the cost of the minimum-cost path between vertices $i$ and $j$, say the communication cost $c_{ij}$. We call $G$ the metric closure of $\tilde{G}$. To compute the metric closure $G$ of $\tilde{G}$, we can use Johnson's algorithm~\cite[Chapter 25]{TCRC09} to find the costs of the minimum-cost paths between all pairs of vertices in $\mathcal{V}$.

\subsection{Repair and Retrieval Requirements} \label{sec:requirement}

We formally define a distributed storage system with specific repair and retrieval requirements as follows:

\begin{definition}[Distributed Storage System]
DSS($n, \rho, d, k, w$) is a distributed storage system with $n$ storage nodes which satisfies the following requirements:
\begin{enumerate}
\item (Data Repair) As long as there are no more than $\rho$ simultaneous node failures, the lost packets of any failed node can be exactly recovered from no more than $d$ surviving nodes.
\item (Data Retrieval) A collection of $w$ retrieval sets of cardinality $k$, denoted by $\Psi \triangleq \{R_1, R_2, \ldots , R_w\}$, is specified such that the data object can be obtained from any retrieval set in $\Psi$.
\end{enumerate}
\end{definition}

In realistic distributed storage systems, $\rho$ is typically a small value. For example, the 3-replication scheme where $\rho=2$ serves the Google File System (GFS) well~\cite{GFS}. On the other hand, the data repair requirement is different from the regular model in that we do not require that a failed node can be repaired by contacting {\em any} $d$ surviving nodes. As for data retrieval, we require that the storage system has $w$ retrieval sets of cardinality $k$. This encompasses the data retrieval requirement of the regular model as a special case, which corresponds to the setting of $w=\binom{n}{k}$. Although in our formulation, all the retrieval sets have the same cardinality, it does not mean that all the storage nodes in a retrieval set need to be contacted for data retrieval, since it is allowed that the data object can be retrieved from a subset of $R_j \in \Psi$.

\section{Code Construction} \label{sec:code}

\subsection{MDS-IFR Code}

Our code construction is a concatenation of an outer MDS code and an inner Irregular Fractional Repetition (IFR) code. We call it MDS-IFR code. The data object comprised of $B$ packets is first encoded into $F$ packets over GF($q$) by using an $(F, B)$-MDS code. Note that such code exists provided that $q \geq F$ (e.g.~\cite{FJMNJAS}). In practice, the decoding complexity may be high for large values of $q$. In that case, the vector linear code recently proposed in~\cite{SG13} can be used instead. This code is easy to decode as it has the property called zigzag decodability and all computations are performed over GF(2). The price to pay is some extra storage overhead. We refer interested readers to~\cite{SG13} for details.

After encoding by the outer code, the set of the $F$ coded packets, denoted by $\mathcal{F}$, is partitioned into $\theta$ {\em coded blocks}, $\mathcal{B}_1, \mathcal{B}_2, \ldots, \mathcal{B}_\theta$, where $|\mathcal{B}_i| \triangleq \beta_i \leq B$ for all $i$. Note that $\beta_i$ denotes the number of packets in $\mathcal{B}_i$, and $F=\sum_{i=1}^{\theta}\beta_i$.  We call ${\bf b} \triangleq [\beta_1, \beta_2, \ldots, \beta_\theta]$ the {\em block assignment vector}, which will be optimized in the next section. We remark that the block assignment vector $\mathbf{b}$, rather than what packets contained in $\mathcal{B}_i$ for $i=1, 2, \ldots, \theta$, will affect the solution of minimizing the system repair cost in the next section. That is why we introduce the definition of coded blocks instead of working directly on packets. Each coded block is then replicated $\rho+1$ times and stored on $\rho+1$ different storage nodes according to an IFR code, defined as follows:

\begin{definition}[Irregular Fractional Repetition Code]
An Irregular Fractional Repetition (IFR) code $\mathcal{C}$ for DSS($n, \rho, d, \cdot, \cdot$) is a collection $\mathcal{C}$ of $n$ subsets of $\Omega \triangleq \{1, 2, \ldots, \theta \}$, satisfying the requirements that each set in $\mathcal{C}$ has cardinality at most $d$ and each element of $\Omega$ belongs to exactly $\rho+1$ sets in $\mathcal{C}$.
\end{definition}

Note that IFR code generalizes FR code in that it only requires the cardinality of each set in $\mathcal{C}$ to be no more than $d$, rather than exactly $d$. That is why we call it {\em irregular}. Besides, it addresses only the repair issue, which will become clear after we introduce the concept of repair overlay, and is independent of the parameters related to data retrieval, that is, $k$ and $w$.

An IFR code can be represented by a hypergraph $\tau=(\mathcal{V},\mathcal{H}^{\tau})$, where $\mathcal{V}$ is the vertex set and $\mathcal{H}^{\tau}\triangleq \{E_1, E_2, \ldots, E_\theta\}$ is a family of $\theta$ non-empty subsets of $\mathcal{V}$, called hyperedges. A hypergraph is said to be $\zeta$-uniform if all of its hyperedges have the same size $\zeta$. The following fact is evident:

\begin{fact}
An Irregular Fractional Repetition (IFR) code $\mathcal{C}$ for DSS($n, \rho, d, k, w$) is equivalent to a $(\rho+1)$-uniform hypergraph $\tau$ with $\theta$ hyperedges and $n$ vertices, each of which has degree less than or equal to $d$.
\end{fact}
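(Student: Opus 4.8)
The plan is to establish the equivalence by exhibiting an explicit, mutually inverse pair of constructions, essentially transposing the node--block incidence structure. Index the $n$ sets of the IFR code $\mathcal{C}$ as $C_1, C_2, \ldots, C_n$, one per storage node, so that $C_i \subseteq \Omega = \{1,2,\ldots,\theta\}$ records which coded blocks are stored on node $i$. Given $\mathcal{C}$, define the hypergraph $\tau = (\mathcal{V}, \mathcal{H}^\tau)$ on $\mathcal{V} = \{1,2,\ldots,n\}$ by setting, for each block index $j \in \Omega$,
\[
E_j \triangleq \{\, i \in \mathcal{V} : j \in C_i \,\},
\]
and $\mathcal{H}^\tau \triangleq \{E_1, E_2, \ldots, E_\theta\}$. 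Conversely, given a $(\rho+1)$-uniform hypergraph $\tau$ with hyperedges $E_1,\ldots,E_\theta$ and all vertex degrees at most $d$, define $C_i \triangleq \{\, j : i \in E_j \,\}$ for each $i \in \mathcal{V}$ and let $\mathcal{C} = \{C_1,\ldots,C_n\}$.

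First I would check that the forward construction indeed yields an admissible hypergraph. The condition that each element $j \in \Omega$ belongs to exactly $\rho+1$ sets of $\mathcal{C}$ says precisely that $|E_j| = \rho+1$ for every $j$; hence $\tau$ is $(\rho+1)$-uniform with exactly $\theta$ hyperedges, and in particular every $E_j$ is non-empty since $\rho+1 \geq 1$. Next, for a fixed vertex $i$, the number of hyperedges containing $i$ is $|\{\, j : i \in E_j \,\}| = |\{\, j : j \in C_i \,\}| = |C_i| \leq d$, so the degree of every vertex is at most $d$, as required. The reverse construction is verified symmetrically: $|E_j| = \rho+1$ forces each block index $j$ to lie in exactly $\rho+1$ of the $C_i$'s, and $\deg_\tau(i) \leq d$ forces $|C_i| \leq d$, so $\mathcal{C}$ satisfies the defining requirements of an IFR code.

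Finally I would confirm that the two constructions are mutual inverses: starting from $\mathcal{C}$, forming $\tau$, and forming the code again returns $C_i' = \{\, j : i \in E_j \,\} = \{\, j : j \in C_i \,\} = C_i$ for all $i$, and the analogous computation holds in the other order. Since both $k$ and $w$ play no role in the definition of the IFR code (as already noted in the text preceding the Fact), the equivalence for DSS($n,\rho,d,k,w$) is the same as for DSS($n,\rho,d,\cdot,\cdot$), and the proof is complete.

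I do not expect a genuine obstacle here; the only point requiring a little care is bookkeeping, namely treating $\mathcal{C}$ and $\mathcal{H}^\tau$ as indexed families rather than plain sets so that repeated sets (coincident blocks stored on the same node pattern, or parallel hyperedges) are handled consistently on both sides, and cleanly identifying the vertex degree of $\tau$ with the cardinality of the corresponding code set.
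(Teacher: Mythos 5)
Your proposal is correct and is exactly the incidence-transposition the paper has in mind: the paper itself gives no formal proof (it states the Fact ``follows directly from the definitions'' and illustrates it with the DSS$(6,2,3,k,w)$ example), and your construction $E_j=\{i: j\in C_i\}$ with its inverse is the natural formalization of that correspondence. The one point you add beyond the paper --- treating $\mathcal{C}$ and $\mathcal{H}^{\tau}$ as indexed families so that repeated sets are preserved --- is a legitimate and worthwhile clarification.
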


We call such kind of hypergraph $\tau$ a {\em repair overlay}, or an {\em overlay hypergraph}. The above fact follows directly from the definitions of IFR code and uniform hypergraph, which can be illustrated by the example below.

{\bf Example}: Let $\Omega=\{1, 2, 3, 4\}$ and $\mathcal{C}=\{ \{1, 2\}, \{1, 3\}, \{1, 4\}, \{3, 4\}, \{2, 3, 4\}, \{2\}\}$. Note that $n = |\mathcal{C}| = 6$. Furthermore, it can be checked that each element of $\Omega$ belongs to three sets in $\mathcal{C}$, so $\rho = 2$. Besides, the cardinality of each set in $\mathcal{C}$ is at most three, so $d = 3$. Therefore, $\mathcal{C}$ is an IFR code for DSS$(6, 2, 3, k, w)$.

This IFR code $\mathcal{C}$ can be represented by a $3$-uniform hypergraph $\tau=(\mathcal{V}, \mathcal{H}^{\tau})$, where $\mathcal{V}=\{v_1, v_2, v_3, v_4, v_5, v_6\}$ and $\mathcal{H}^{\tau}=\{E_1=\{v_1, v_2, v_3\}, E_2=\{v_1, v_5, v_6\}, E_3=\{v_2, v_4, v_5\}, E_4=\{v_3, v_4, v_5\}\}$, as shown in Fig.~\ref{fi:IFR_Hypergraph_Example}. The hypergraph $\tau$ has $n=6$ vertices, each of which has degree less than or equal to $d=3$. Each hyperedge in $\tau$ contains $\rho+1=3$ vertices.

\begin{figure}
  \centering
  \scalebox{0.6}{\includegraphics{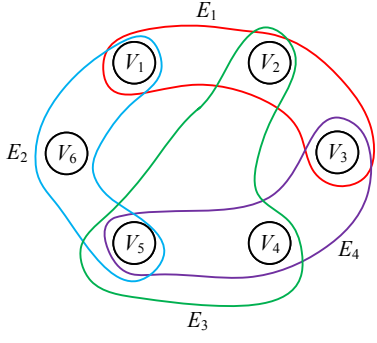}}
  \caption{A $3$-uniform hypergraph of six vertices with maximum degree $3$.}
  \label{fi:IFR_Hypergraph_Example}
\end{figure}

\subsection{Data Distribution and Data Repair}

Let $\tau = (\mathcal{V},\mathcal{H}^{\tau})$ be a given repair overlay. As described before, the data object is first encoded into $\theta$ coded blocks by an outer MDS code. For $i=1, 2, \ldots, \theta$, the coded block $\mathcal{B}_i$ is then assigned to $E_i \in \mathcal{H}^{\tau}$. All vertices contained in $E_i$ then store $\mathcal{B}_i$ in common. The storage amount $\alpha_v$ of a vertex $v \in \mathcal{V}$ can then be obtained as $\alpha_v=\sum_{i: v \in E_i} \beta_i$. Note that $F$ and $\alpha_v$ are related by $(\rho+1) F = \sum_{v \in \mathcal{V}} \alpha_v$, since each coded block is replicated $\rho+1$ times.

Data repair is very simple. When there is a node failure, a newcomer will replace the failed node by retrieving the previously stored data from a set of helper nodes. For example, suppose node $v$ which contains coded blocks $\{\mathcal{B}_i: v \in E_i\}$ fails, the newcomer can directly retrieve $\mathcal{B}_i$ from any surviving node in $E_i$ for all $i$ such that $v \in E_i$. This is what we call {\em uncoded} and {\em exact} repair. Since the cardinality of a hyperedge is $\rho+1$, this kind of repair can be done successfully provided that the number of node failures is no more than $\rho$.

\vspace{2mm}

{\bf Example}: Consider a distributed storage network $\tilde{G}$ shown in Fig.~\ref{fi:RepairExample}$(a)$. The number associated with an edge denotes the single-hop cost between its two endpoints. Fig.~\ref{fi:RepairExample}$(b)$ is the metric closure, $G$, of $\tilde{G}$, where the number associated with an edge is the corresponding communication cost. Suppose this storage network can tolerate up to $\rho=2$ node failures, and each failed node can be recovered from at most $d=3$ available storage nodes. One feasible repair overlay is shown in Fig.~\ref{fi:RepairExample}$(c)$, where every hyperedge has $\rho+1=3$ nodes and the degree of each node is less than or equal to $d=3$. Assign coded blocks $\mathcal{B}_1$, $\mathcal{B}_2$, $\mathcal{B}_3$ and $\mathcal{B}_4$ to hyperedges $E_1$, $E_2$, $E_3$ and $E_4$ respectively. Then node $1$ would store blocks $\mathcal{B}_1$ and $\mathcal{B}_2$, node $2$ would store $\mathcal{B}_1$ and $\mathcal{B}_4$, node $3$ would store $\mathcal{B}_1$, $\mathcal{B}_3$ and $\mathcal{B}_4$, node $4$ would store $\mathcal{B}_2$, $\mathcal{B}_3$ and $\mathcal{B}_4$, and node $5$ would store $\mathcal{B}_2$ and $\mathcal{B}_3$. Suppose nodes $1$ and $2$ fail. The newcomer for node~1 can download $\mathcal{B}_1$ from node~3 and download $\mathcal{B}_2$ from either node~4 or node~5, while the newcomer for node~2 can download $\mathcal{B}_1$ from node~3 and download $\mathcal{B}_4$ from node~3 or node~4.

\begin{figure}
  \centering
  \scalebox{0.6}{\includegraphics{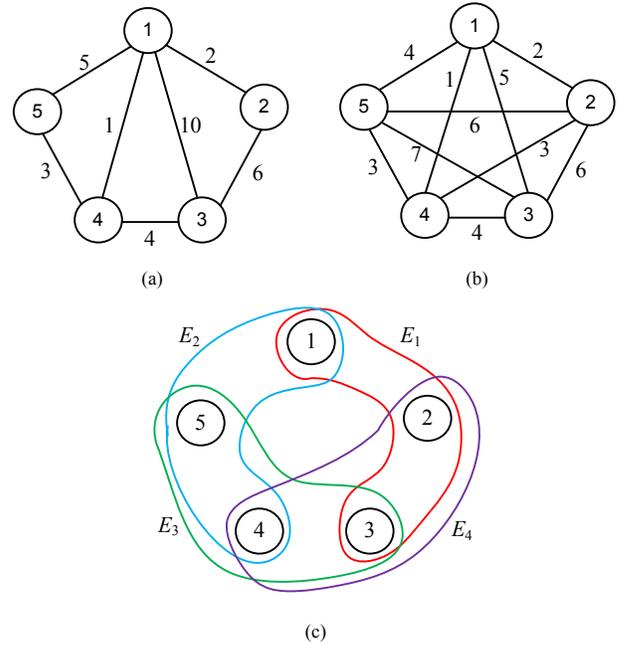}}
  \caption{An example of constructing MDS-IFR code on hypergraph.}
  \label{fi:RepairExample}
\end{figure}

\section{Repair Cost Minimization} \label{sec:formulation}

In this section, we consider the problem of minimizing repair cost. We first present an algorithm to find optimal repair order when there is more than one failed node. Based on this result, we further construct an optimization framework to determine the rate of the outer MDS code, the structure of the IFR code by means of repair overlay, the storage amount of each node, and the collection of retrieval sets.

\subsection{Repair Order under Multiple Failures}

In this subsection, we assume that a repair overlay, $\tau$, is given. We use $\Upsilon$ to denote a set of failed nodes and call it a failure pattern. Let $\Upsilon_i \triangleq \Upsilon \bigcap E_i$ be the set of failed nodes in hyperedge $E_i \in \tau$ under failure pattern $\Upsilon$. To repair all the failed nodes in $\Upsilon$, we need $|\Upsilon|$ newcomers in total. All lost blocks, i.e., $\{\mathcal{B}_i: \Upsilon_i \neq \emptyset\}$, need to be regenerated in the corresponding newcomers. This can always be done provided that $|\Upsilon| \leq \rho$. In that case, we say that $\Upsilon$ is {\em repairable}.

Let us focus on one particular lost block, $\mathcal{B}_i$. Each newcomer of a failed node in $\Upsilon_i$ needs to get a copy of $\mathcal{B}_i$ from a certain helper node, which can be a surviving node or another newcomer that has already recovered the coded block $\mathcal{B}_i$. The cost for a newcomer to repair $\mathcal{B}_i$ is simply the block size, $\beta_i$, multiplied by the communication cost between the newcomer and its helper node. If there is only one node in $\Upsilon_i$, then it is clear that the only newcomer, say node $v$, should choose a helper node $u$ that minimizes the communication cost $c_{uv}$. If there are multiple nodes in $\Upsilon_i$, the repair order will affect the total repair cost. To minimize the total repair cost for $\mathcal{B}_i$, a greedy algorithm, which is stated in Algorithm~\ref{alg:repairprocess}, can be used.

\begin{algorithm}
\caption{Repair process of coded block $\mathcal{B}_i$ in hyperedge $E_i$}
\label{alg:repairprocess}
\begin{enumerate}
 \item Pick the minimum-weight edge $e = \{u, v\} \in G$, where $u \in E_i \backslash \Upsilon_i$ and $v \in \Upsilon_i$. Then the newcomer of node $v$ chooses node $u$ to be its helper node and downloads a copy of $\mathcal{B}_i$ from node $u$ along the minimum-cost path.

 \item Remove node $v$ from $\Upsilon_i$, i.e., $\Upsilon_i \gets \Upsilon_i \backslash \{v\}$, which means that the newcomer of node $v$ has already recovered $\mathcal{B}_i$ and is able to act as helper node of other newcomers that still need to recover $\mathcal{B}_i$.

 \item Repeat Steps 1) and 2) until all the newcomers in $E_i$ recover $\mathcal{B}_i$, i.e., $\Upsilon_i= \emptyset$.
\end{enumerate}
\end{algorithm}

\begin{theorem}
For any given repairable failure pattern $\Upsilon$, Algorithm~\ref{alg:repairprocess} minimizes the cost of repairing $\mathcal{B}_i$, for all $i$ such that $\Upsilon_i \neq \emptyset$.
\end{theorem}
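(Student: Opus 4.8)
\emph{Proof proposal.} The plan is to recast the single-block repair problem as a minimum spanning tree problem and then observe that Algorithm~\ref{alg:repairprocess} is nothing but Prim's algorithm for it. Fix $i$ with $\Upsilon_i\neq\emptyset$. Since $|\Upsilon_i|\le|\Upsilon|\le\rho<\rho+1=|E_i|$, the set $S_i\triangleq E_i\setminus\Upsilon_i$ of surviving nodes already holding $\mathcal{B}_i$ is non-empty, and since only the nodes of $E_i$ ever hold $\mathcal{B}_i$, every helper used to repair $\mathcal{B}_i$ lies in $E_i$. A repair scheme for $\mathcal{B}_i$ is thus fully described by an order in which the newcomers in $\Upsilon_i$ recover $\mathcal{B}_i$ together with, for each such newcomer $v$, a helper $h(v)\in E_i$ that has already recovered $\mathcal{B}_i$ by the time $v$ is repaired (so $h(v)\in S_i$, or $h(v)$ precedes $v$ in the order). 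Its total cost is $\beta_i\sum_{v\in\Upsilon_i}c_{h(v),v}$, and since $\beta_i>0$ is a fixed scalar it suffices to minimize $\sum_{v\in\Upsilon_i}c_{h(v),v}$.

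Next I would contract all of $S_i$ into a single super-source $\sigma$ and form the complete graph $G_i$ on $\{\sigma\}\cup\Upsilon_i$ with edge weight $c_{uv}$ for $u,v\in\Upsilon_i$ and weight $\min_{u\in S_i}c_{uv}$ for $\{\sigma,v\}$. The key lemma is that feasible repair schemes for $\mathcal{B}_i$ are in cost-preserving correspondence with spanning trees of $G_i$. Given a feasible scheme, orient each newcomer $v$ toward its helper $h(v)$; feasibility forces the directed graph on $\Upsilon_i$ with an arc $v\to h(v)$ whenever $h(v)\in\Upsilon_i$ to be acyclic and to terminate in $S_i$, so after contraction it becomes a spanning tree of $G_i$ rooted at $\sigma$ whose weight is exactly $\sum_{v}c_{h(v),v}$. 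Conversely, any spanning tree of $G_i$, rooted at $\sigma$ and oriented outward, determines a helper assignment $h$ together with any order extending the root-to-leaf partial order, and this scheme is feasible with the same weight. Hence the minimum cost of repairing $\mathcal{B}_i$ equals $\beta_i$ times the weight of a minimum spanning tree of $G_i$.

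Finally, Algorithm~\ref{alg:repairprocess} is precisely Prim's algorithm on $G_i$ started at $\sigma$: at each iteration the set $E_i\setminus\Upsilon_i$ is the vertex set of the current partial tree, Step~1 selects a minimum-weight edge $\{u,v\}$ crossing the cut to the still-unrepaired newcomers, and Step~2 adds $v$ (with helper $u$) to the tree; moreover $\min_{u\in E_i\setminus\Upsilon_i}c_{uv}$ agrees with the $G_i$ weight of the crossing edge at $v$. Since Prim's algorithm returns a minimum spanning tree~\cite[Chapter~23]{TCRC09}, Algorithm~\ref{alg:repairprocess} attains the minimum cost of repairing $\mathcal{B}_i$. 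The argument is identical for every $i$ with $\Upsilon_i\neq\emptyset$, and the transmissions for distinct blocks are disjoint and their costs additive, so the theorem follows.

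I expect the one genuinely delicate step to be the correspondence in the second paragraph: one must check carefully that a helper assignment $h$ admits \emph{some} valid recovery ordering exactly when its induced arc relation on $\Upsilon_i$ is acyclic, and, conversely, that every spanning tree of the contracted graph is realizable by repairing newcomers in an order consistent with the tree. The remaining ingredients — non-emptiness of $S_i$, the fact that helpers must lie in $E_i$, and the irrelevance of the positive scalar $\beta_i$ — are routine.
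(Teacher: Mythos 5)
Your proposal is correct and follows essentially the same route as the paper's proof: contract the surviving nodes $E_i\setminus\Upsilon_i$ into a single virtual source with edge weights $\min_{u\in E_i\setminus\Upsilon_i}c_{uv}$, identify the minimum repair cost of $\mathcal{B}_i$ with $\beta_i$ times the weight of a minimum spanning tree of the resulting complete graph, and observe that Algorithm~1 selects exactly the edges Prim's algorithm would select starting from the source. Your extra care in justifying the repair-scheme/spanning-tree correspondence (which the paper merely asserts) is a welcome elaboration, not a deviation.
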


\begin{proof}
Replace all nodes $E_i \backslash \Upsilon_i$ by a virtual node $s$. For $v \in \Upsilon_i$, let $c_{sv} \triangleq \min \{c_{uv}: u \in E_i \backslash \Upsilon_i\}$. A weighted complete graph $K$ can be constructed on the vertex set $\Upsilon_i \cup \{s\}$, where the edge weight is the corresponding communication cost. The repairing of $\mathcal{B}_i$ is equivalent to sending $\mathcal{B}_i$ from node $s$ to all nodes in $\Upsilon_i$. Therefore, the minimum cost of repairing $\mathcal{B}_i$ is equal to $\beta_i$ times the weight of the minimum spanning tree of the graph $K$. Let $e_1, e_2, \ldots, e_{|\Upsilon_i|}$ be the sequence of edges of $G$ chosen by Algorithm~\ref{alg:repairprocess}. Let $f_1, f_2, \ldots, f_{|\Upsilon_i|}$ be the sequence of edges of $K$ chosen by the well-known Prim's algorithm~\cite[Chapter 23]{TCRC09} for finding a minimum spanning tree on graph $K$ with $s$ being the initial vertex. It can be seen that the communication cost of $e_i$ is equal to the weight of $f_i$ for all $i$. Therefore, Algorithm~\ref{alg:repairprocess} is optimal.
\end{proof}

Note that the repair processes of different coded blocks are independent, and thus can be executed in parallel. In repairing $\mathcal{B}_i$ under the failure pattern $\Upsilon$, let the set of edges chosen by Algorithm~\ref{alg:repairprocess} be denoted by $T^\Upsilon_i$. Given a repair overlay $\tau$ and a block assignment vector ${\bf b}$, the total repair cost, normalized by the object size $B$, under failure pattern $\Upsilon$ is then given by
\begin{equation} \label{repair_cost_failure}
\tilde{c}_r(\tau, {\bf b}, \Upsilon) = \frac{1}{B} \sum_{i: \Upsilon_i \neq \emptyset} \sum_{\{u,v\} \in T^\Upsilon_i}c_{uv}\beta_i.
\end{equation}

\subsection{MDS-IFR Code Optimization}

Our objective is to design the MDS-IFR code so as to minimize the system repair cost. If a non-repairable failure pattern $\Upsilon$ occurs, the whole data object will be decoded by downloading data from one of the retrieval sets and then re-encoded for storage in the newcomers. We assume that the system is properly designed so that the probability of occurrence of a non-repairable failure pattern is small. Therefore, we focus on minimizing the expected repair cost per unit data, where the expectation is taken over all repairable patterns. We call it the {\em system repair cost} and denote it by $c_r(\tau, \mathbf{b})$. Given a repair overlay $\tau$ and a block assignment vector $\mathbf{b}$, it can be written as
\begin{equation}
c_r(\tau, \mathbf{b}) \triangleq \sum_{\Upsilon: \text{repairable}} p(\Upsilon) \tilde{c}_r(\tau, \mathbf{b}, \Upsilon),
\end{equation}
where $p(\Upsilon)$ be the probability of occurrence of $\Upsilon$, on the condition that the failure pattern $\Upsilon$ is repairable.

Let $\tilde{E}_1, \tilde{E}_2, \ldots, \tilde{E}_{\binom{n}{\rho+1}}$ be all the $(\rho+1)$-subsets of $\mathcal{V}$. We use a binary variable to indicate whether $\tilde{E}_i$ belongs to the hyperedge set $\mathcal{H}^{\tau}$ of a repair overlay $\tau$:
\[
 x_i \triangleq \begin{cases}
 0 & \text{ if } \tilde{E}_i \notin \mathcal{H}^{\tau}, \\
 1 & \text{ if } \tilde{E}_i \in \mathcal{H}^{\tau}.
 \end{cases}
\]
Let ${\bf x} \triangleq [x_1, x_2, ..., x_{\binom{n}{\rho+1}}]$. We call it an {\em overlay selection vector}. To ensure that the degree of each vertex $v$ in $\mathcal{V}$ is not larger than $d$, we have the following constraints
\begin{equation}
\label{eq:degree}
\sum_{i: v \in \tilde{E}_i} x_i \leq d, \; \forall v \in \mathcal{V}.
\end{equation}
Note that the binary vector ${\bf x}$ defines a repair overlay, which we denote it by $\tau({\bf x})$.

As a coded block $\mathcal{B}_i \subseteq \mathcal{F}$ is assigned to hyperedge $\tilde{E}_i$ if and only if it would be contained in the overlay hypergraph, we therefore have the constraints
\begin{equation}
\label{eq:beta_i}
0 \leq \beta_i \leq Bx_i, \; i=1, 2, \ldots, \binom{n}{\rho+1}.
\end{equation}

For each storage node, it stores all the coded blocks associated with the hyperedges containing it. Thus the storage amount of node $v$, denoted by $\alpha_v$, can be obtained as
\begin{equation}
\alpha_v=\sum_{i: v \in \tilde{E}_i} \beta_i, \; \forall v \in \mathcal{V}.
\label{eq:storageamount}
\end{equation}
Note that considering data retrieval, $\alpha_v$ need not be greater than $B$ due to the outer MDS code. To facilitate efficient uncoded repair, however, we allow $\alpha_v$ to exceed $B$.
Given a repair overlay $\tau$ and a block assignment vector $\mathbf{b}$, we assume that there is a constraint on the {\em system storage cost}, denoted by $c_s(\tau, \mathbf{b})$, which is defined as the cost of storing one unit data object in DSS($n, \rho, d, k, w$). Let $C_s$ be the maximum allowable system storage cost. Then we have
\begin{equation}
c_s(\tau(\mathbf{x}), \mathbf{b}) \triangleq \frac{1}{B} \sum_{v=1}^{n}s_v \alpha_v \leq C_s.
\label{eq:storagecost}
\end{equation}
Note that $C_s$ is a given constant, which constrains the total storage cost in the system. We do not impose any constraint on the storage amount of each storage node. If needed, that kind of constraints can be easily added, and our proposed algorithm, to be described in a later section, can still be applied without any modification.

Recall that the DSS($n, \rho, d, k, w$) needs to satisfy the data retrieval requirement. There is a collection of retrieval sets, $\Psi = \{R_1, R_2, \ldots, R_w\}$. Part or all of these sets may be pre-determined based on considerations other than storage and repair costs. For example, if a data object is mainly needed by users in a specific geographical region, it would be more convenient if one or more retrieval sets are formed by storage nodes in that region, so that the response time for a user to download that object can be shortened. To provide more flexibility in our optimization framework, we allow $w_1 \leq w$ retrieval sets be given while the remaining $w_2 = w - w_1$ retrieval sets are obtained by our optimization procedure. For the pre-determined retrieval sets, $R_1, R_2, \ldots, R_{w_1}$, we need to ensure that each of them stores at least $B$ coded packets in $\mathcal{F}$. Therefore, we have the following constraints:
\begin{equation}
\sum_{i: \tilde{E}_i \cap R_j \neq \emptyset} \beta_i \geq B, \;\; j=1, 2, \ldots, w_1.
\label{eq:retrieval_fixed}
\end{equation}
It remains to determine the other $w_2$ retrieval sets. Denote the $k$-subsets of $\mathcal{V}$, excluding the pre-determined retrieval sets, by $Q_1, Q_2, \ldots, Q_W$, where $W \triangleq \binom{n}{k}- w_1$. To indicate whether $Q_j$ is a retrieval set or not, we introduce a binary variable
\[
y_j \triangleq \begin{cases}
0 & \text{ if } Q_j \notin \Psi, \\
1 & \text{ if } Q_j \in \Psi.
\end{cases}
\]
Let ${\bf y} \triangleq [y_1, y_2, ..., y_W]$. We call it a {\em retrieval set selection vector}. To guarantee that there are $w_2$ more retrieval sets, we have
\begin{equation}
\sum_{j=1}^W y_j=w_2.
\label{eq:retrievalsetno}
\end{equation}
Note that $\mathbf{y}$ defines a collection of retrieval sets, which we denote it by $\Psi(\mathbf{y})$. Similar as before, we have
\begin{equation}
\sum_{i: \tilde{E}_i \cap Q_j \neq \emptyset} \beta_i \geq By_j, \; j=1, 2, \ldots, W.
\label{eq:retrieval}
\end{equation}

As mentioned before, our objective function is the system repair cost of DSS($n, \rho, d, k, w$). Formally, the repair cost minimization problem can be stated as follows.
\begin{equation}
\text{Minimize}~~c_r(\tau(\mathbf{x}), \mathbf{b}) \triangleq  \sum_{\Upsilon: \text{repairable}} p(\Upsilon) \tilde{c}_r(\tau({\bf x}), \mathbf{b}, \Upsilon),
\label{eq:objtivefunc}
\end{equation}
subject to
\begin{align*}
\eqref{eq:degree}-\eqref{eq:retrieval},
\end{align*}

\begin{equation}
x_i \in \{0, 1\}, \; i=1, 2, \ldots, \binom{n}{\rho+1},
\label{eq:x_bianry}
\end{equation}

\begin{equation}
y_j \in \{0, 1\}, \; j=1, 2, \ldots, W,
\label{eq:y_bianry}
\end{equation}

\begin{equation}
\beta_i \in \mathbb{N}, \; i=1, 2, \ldots, \binom{n}{\rho+1}.
\label{eq:beta_integer}
\end{equation}
The optimization is an integer linear programming (ILP) problem, where $\mathbf{x}$, $\mathbf{y}$, and $\mathbf{b}$ are the optimization variables.

\section{Repair-Storage Tradeoff} \label{sec:tradeoff}

In our formulation, it is clear that there is a tradeoff between the system storage cost, $c_s$, and the system repair cost, $c_r$. To make this relationship more explicit, we introduce the following notions. For the ease of presentation, we assume that $w_1 = 0$ for the rest of this paper.


\begin{definition}[Achievability]
 A cost pair $(c_r^*, c_s^*)$ is $B$-achievable by the MDS-IFR code if given any data object of size $B$, there exists a repair overlay $\tau$, a collection of retrieval sets $\Psi$, and a block assignment vector ${\bf b}$ such that $c_r(\tau, {\bf b}) \leq c_r^*$, $c_s(\tau, {\bf b}) \leq c_s^*$, and the data repair and retrieval requirements are satisfied.
\end{definition}

\vspace{2mm}

Note that $\beta_i$'s must be integers no more than $B$. Therefore, the achievable region enlarges when $B$ increases. It is therefore natural to define the asymptotic achievable region for arbitrarily large value of $B$:

\begin{definition}[Asymptotic achievability]
  A cost pair $(c_r^*, c_s^*)$ is asymptotically achievable by the MDS-IFR code if for any $\epsilon > 0$, there exists for sufficiently large $B$, a repair overlay $\tau$, a collection of retrieval sets $\Psi$, and a block assignment vector ${\bf b}$ such that $c_r(\tau, {\bf b}) < c_r^* + \epsilon$, $c_s(\tau, {\bf b}) < c_s^* + \epsilon$, and the data repair and retrieval requirements are satisfied.
\end{definition}


The following result shows that the asymptotically achievable cost can be obtained by relaxing the integer constraint on ${\bf b}$:
\begin{theorem}
Given any $B$ and $C_s$, let ${\bf b}^* = [\beta_i^*]$, $\mathbf{x}^*$, and $\mathbf{y}^*$ be the solution to the repair cost minimization after relaxing the integer constraint on ${\bf b}$, and $c_r^* \triangleq c_r( \tau(\mathbf{x}^*), {\bf b}^*)$ be the corresponding system repair cost. The cost pair $(c_r^*, C_s)$ is asymptotically achievable by the MDS-IFR code.
\end{theorem}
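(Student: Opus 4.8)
The plan is to reuse the structural part of the relaxed optimum verbatim and only re-discretize the block sizes. Let $\tau^* \triangleq \tau(\mathbf{x}^*)$ be the repair overlay and $\Psi^* \triangleq \Psi(\mathbf{y}^*)$ the collection of retrieval sets determined by $\mathbf{x}^*$ and $\mathbf{y}^*$. These two objects will be kept fixed for every object size, which is legitimate because the constraints defining them, namely the degree bound \eqref{eq:degree} and the retrieval-count constraint \eqref{eq:retrievalsetno}, involve neither $B$ nor $\mathbf{b}$. In particular $\tau^*$ is a $(\rho+1)$-uniform hypergraph of maximum degree at most $d$, so every failure pattern of size at most $\rho$ is repairable by uncoded repair (each block sits on $\rho+1$ nodes); thus the data-repair requirement holds automatically, and the data-retrieval requirement will follow from \eqref{eq:retrieval}.

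Given $\epsilon > 0$, fix a (large) positive integer $B'$ and set $\hat\beta_i \triangleq \lceil (B'/B)\,\beta_i^* \rceil$, with $\hat{\mathbf{b}} \triangleq [\hat\beta_i]$. I would then check that $(\mathbf{x}^*, \mathbf{y}^*, \hat{\mathbf{b}})$ is feasible for the ILP with object size $B'$: the $\hat\beta_i$ are non-negative integers; when $x_i^* = 0$ we get $\beta_i^* = 0$ by \eqref{eq:beta_i} and hence $\hat\beta_i = 0$; when $x_i^* = 1$ we have $\beta_i^* \le B$, so $(B'/B)\beta_i^* \le B'$ and therefore $\hat\beta_i \le B'$ since $B'$ is an integer, which gives \eqref{eq:beta_i} at size $B'$; and $\hat\beta_i \ge (B'/B)\beta_i^*$ for every $i$, so summing over $\{\, i : \tilde{E}_i \cap Q_j \neq \emptyset \,\}$ and invoking \eqref{eq:retrieval} for $\mathbf{b}^*$ yields \eqref{eq:retrieval} for $\hat{\mathbf{b}}$ at size $B'$ (with $\mathbf{y}^*$, hence \eqref{eq:retrievalsetno}, unchanged).

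The remaining task is to bound the two costs. The key observation is that the edge set $T_i^\Upsilon$ returned by Algorithm~\ref{alg:repairprocess} is built from minimum-weight edges of $G$ and so depends only on $\tau^*$, not on $\mathbf{b}$; hence, writing $w_i^\Upsilon \triangleq \sum_{\{u,v\} \in T_i^\Upsilon} c_{uv} \ge 0$, the normalized repair cost is linear in $\mathbf{b}$ with the same coefficients at every object size. Using $\hat\beta_i \le (B'/B)\beta_i^* + 1$ together with $\frac{1}{B'}\cdot\frac{B'}{B}\beta_i^* = \frac{1}{B}\beta_i^*$, one obtains $c_r(\tau^*, \hat{\mathbf{b}}) \le c_r^* + M/B'$, where $M \triangleq \sum_{\Upsilon:\text{ repairable}} p(\Upsilon) \sum_{i : \Upsilon_i \neq \emptyset} w_i^\Upsilon$ is a finite constant independent of $B'$. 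Applying the same rounding estimate to $\alpha_v = \sum_{i : v \in \tilde{E}_i} \hat\beta_i$ gives $c_s(\tau^*, \hat{\mathbf{b}}) \le c_s(\tau^*, \mathbf{b}^*) + N/B' \le C_s + N/B'$ for a finite constant $N$ (of the form $\sum_v s_v$ times a bound on the number of hyperedges through a vertex), the last inequality by feasibility of $\mathbf{b}^*$ in \eqref{eq:storagecost}. Choosing $B' > \max\{M,N\}/\epsilon$ then gives $c_r(\tau^*, \hat{\mathbf{b}}) < c_r^* + \epsilon$ and $c_s(\tau^*, \hat{\mathbf{b}}) < C_s + \epsilon$ while all repair and retrieval requirements hold, which is precisely asymptotic achievability of $(c_r^*, C_s)$.

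This is a rounding/scaling argument, so I do not expect a deep obstacle; the two points needing care are (i) verifying that rounding up never breaks the upper box constraint $\beta_i \le B' x_i$, which is exactly where $\beta_i^* \le B$ and the integrality of $B'$ are used, and (ii) making explicit that the repair-cost coefficients $w_i^\Upsilon$ are intrinsic to $\tau^*$, so that rescaling and then rounding $\mathbf{b}$ perturbs $c_r$ only through the additive $+1$ per coordinate, i.e. by $O(1/B')$ after normalization by $B'$.
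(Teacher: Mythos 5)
Your proposal is correct and follows essentially the same route as the paper's own proof: scale $B$ and the relaxed $\beta_i^*$'s by a large factor, round the block sizes up to integers, and observe that the normalized repair and storage costs are linear in $\mathbf{b}$ with coefficients independent of $B$, so the rounding perturbs each cost by only $O(1/B')$. Your explicit verification that $(\mathbf{x}^*,\mathbf{y}^*,\hat{\mathbf{b}})$ remains feasible (the box constraint \eqref{eq:beta_i} and the retrieval constraints \eqref{eq:retrieval} at the new object size) is a detail the paper leaves implicit, but it does not change the argument.
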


\begin{proof}
First of all, note that $c_r$ and $c_s$ are invariant to scaling $B$ and all $\beta_{i}$'s by the same amount, no matter whether  $\beta_{i}$'s are integers or not. Suppose we scale up $B$ and $\beta_{i}$'s all by $\gamma > 1$. Then we round {\em up} all $\beta_{i}$'s to the nearest integers. By~\eqref{repair_cost_failure} and~\eqref{eq:objtivefunc}, the new system repair cost is given by
\begin{align}
\tilde{c}_r &= \frac{1}{\gamma B} \sum_{\Upsilon: \text{repairable}} p(\Upsilon) \sum_{i: \Upsilon_i \neq \emptyset} \sum_{\{u,v\} \in T^\Upsilon_i}c_{uv}(\gamma \beta_i + z_i) \\
&= \frac{1}{B} \sum_{\Upsilon: \text{repairable}} p(\Upsilon) \sum_{i: \Upsilon_i \neq \emptyset} \sum_{\{u,v\} \in T^\Upsilon_i}c_{uv} \beta_i  \nonumber \\
& \;\;\;\;\; + \frac{1}{\gamma B} \sum_{\Upsilon: \text{repairable}} p(\Upsilon) \sum_{i: \Upsilon_i \neq \emptyset} \sum_{\{u,v\} \in T^\Upsilon_i}c_{uv} z_i,
\end{align}
where $0 \leq z_i < 1$. Since $\gamma$ can be arbitrarily large, the second term can always be made smaller than $\epsilon$. Similarly, the new storage cost can be proven to be smaller than $C_s + \epsilon$ for sufficiently large $\gamma$. Therefore,  $(c_r^*, C_s)$ is asymptotically achievable.
\end{proof}

To identify the optimal tradeoff between system repair cost and system storage cost, we need to introduce the following two concepts:

\vspace{2mm}

\begin{definition}[Pareto-optimality]
 A $B$-achievable cost pair $(c_r^*, c_s^*)$ is called Pareto-optimal if and only if there does not exist other $B$-achievable cost pair $(c_r, c_s)$ such that the following two conditions are satisfied:
 \begin{enumerate}
 \item $c_r \leq c_r^*$ and $c_s \leq c_s^*$, and

 \item $c_r < c_r^*$ or $c_s < c_s^*$.
 \end{enumerate}
\end{definition}

Roughly speaking, Pareto-optimal $B$-achievable cost pairs are ``on the boundary" of the set of all $B$-achievable cost pairs. In fact, to characterize the set of $B$-achievable cost pairs, it is necessary and sufficient to characterize only the so-called Pareto frontier:

\vspace{2mm}

\begin{definition}[Pareto frontier]
The Pareto frontier is the set of all Pareto-optimal $B$-achievable cost pairs.
\end{definition}

\begin{theorem} \label{th:finite_front}
The Pareto frontier is a finite set.
\end{theorem}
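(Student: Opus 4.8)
The plan is to show that the set of Pareto-optimal $B$-achievable cost pairs is finite by arguing that there are only finitely many ``essentially distinct'' choices of the design ingredients $(\tau, \Psi, {\bf b})$ that can ever give rise to a Pareto-optimal pair. First I would observe that for fixed $B$, the repair overlay $\tau$ is a $(\rho+1)$-uniform hypergraph on the $n$ vertices in $\mathcal{V}$ with maximum degree at most $d$; since each hyperedge is one of the $\binom{n}{\rho+1}$ possible $(\rho+1)$-subsets, there are only finitely many such hypergraphs (at most $2^{\binom{n}{\rho+1}}$ of them). Likewise, a collection of retrieval sets $\Psi$ consists of $w$ subsets of cardinality $k$, so there are at most $\binom{\binom{n}{k}}{w}$ choices. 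Finally, for a fixed $\tau$ with $\theta$ hyperedges, the block assignment vector ${\bf b} = [\beta_1, \ldots, \beta_\theta]$ has each $\beta_i \in \{0, 1, \ldots, B\}$ by~\eqref{eq:beta_i} and~\eqref{eq:beta_integer}, so there are at most $(B+1)^\theta \leq (B+1)^{\binom{n}{\rho+1}}$ possibilities. Hence the set of all feasible triples $(\tau, \Psi, {\bf b})$ is finite.

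Next I would note that each feasible triple $(\tau, \Psi, {\bf b})$ determines a single cost pair $(c_r(\tau, {\bf b}), c_s(\tau, {\bf b}))$ via the formulas for the system repair cost in~\eqref{eq:objtivefunc} and the system storage cost in~\eqref{eq:storagecost}. Therefore the set of cost pairs that are realized by \emph{some} feasible triple — call it $\mathcal{A}_0$ — is a finite set. By the definition of $B$-achievability, a pair $(c_r^*, c_s^*)$ is $B$-achievable exactly when it dominates (componentwise, with $\leq$) some element of $\mathcal{A}_0$; that is, the $B$-achievable region is the ``upper-right closure'' $\bigcup_{(a,b)\in\mathcal{A}_0} \left([a,\infty)\times[b,\infty)\right)$. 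A pair on the Pareto frontier must itself be $B$-achievable and must not be strictly dominated by any other $B$-achievable pair; in particular a Pareto-optimal pair cannot strictly dominate any element of $\mathcal{A}_0$ other than possibly itself, so every Pareto-optimal pair lies in $\mathcal{A}_0$. Since $\mathcal{A}_0$ is finite, the Pareto frontier — being a subset of $\mathcal{A}_0$ — is finite as well.

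The only mildly delicate point, and the one I would spell out carefully, is the last implication: that a Pareto-optimal $B$-achievable pair $(c_r^*, c_s^*)$ is necessarily itself a realized cost pair, i.e., an element of $\mathcal{A}_0$, rather than merely dominating one. The argument is: since $(c_r^*, c_s^*)$ is $B$-achievable, there is a feasible triple with realized pair $(a, b) \in \mathcal{A}_0$ satisfying $a \leq c_r^*$ and $b \leq c_s^*$. But $(a,b)$ is itself $B$-achievable (it dominates itself), so if $(a,b) \neq (c_r^*, c_s^*)$ then $(a,b)$ would violate the Pareto-optimality of $(c_r^*, c_s^*)$ via conditions (1) and (2) of the Pareto-optimality definition. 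Hence $(a,b) = (c_r^*, c_s^*)$ and the Pareto-optimal pair belongs to $\mathcal{A}_0$. I expect the main obstacle to be purely expository rather than mathematical: making the finiteness counting clean (in particular noting that the number of hyperedges $\theta$ in any feasible $\tau$ is bounded, e.g.\ by $nd/(\rho+1)$ or crudely by $\binom{n}{\rho+1}$, so the $\beta_i$-count is uniformly bounded) and handling the degenerate case where $\mathcal{A}_0$ might be empty (no feasible triple exists), in which case the Pareto frontier is empty and the claim holds trivially.
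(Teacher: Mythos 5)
Your proof is correct and follows essentially the same route as the paper: the paper's own argument simply notes that the $\beta_i$'s are integers bounded by $B$ and the $x_i$'s and $y_j$'s are binary, so the solution space is finite and hence so is the Pareto frontier. Your write-up additionally spells out the step the paper leaves implicit — that every Pareto-optimal $B$-achievable pair must itself be one of the finitely many realized cost pairs rather than merely dominating one — which is a welcome bit of extra rigor but not a different approach.
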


\begin{proof}
Note that $\beta_i$'s are non-negative integers. According to~\eqref{eq:beta_i}, they are all less than or equal to $B$.
Since the other variables, $x_i$'s and $y_i$'s, are all binary, the solution space is finite. Hence, the Pareto frontier is finite too.
\end{proof}

Since the Pareto frontier is finite, it is possible to list all of them in finite time, which can be done by Algorithm~\ref{alg:paretofront}.

\begin{algorithm}
\caption{Find the Pareto frontier}
\label{alg:paretofront}
\BlankLine

\begin{enumerate}
\item Generate a solution which minimizes $c_r$ subject to the constraints \eqref{eq:degree}-\eqref{eq:beta_i}, \eqref{eq:retrieval_fixed}-\eqref{eq:retrieval} and \eqref{eq:x_bianry}-\eqref{eq:beta_integer}. Exit if no solution is found. Otherwise, let $c_r^*$ be the optimal value for $c_r$. \\
\item Constrain $c_r$ to be equal to $c_r^*$, and generate a solution which minimize $c_s$ subject to constraints \eqref{eq:degree}-\eqref{eq:beta_i}, \eqref{eq:retrieval_fixed}-\eqref{eq:retrieval} and \eqref{eq:x_bianry}-\eqref{eq:beta_integer}. Let $c_s^*$ be the optimal value for $c_s$. Output $(c_r^*, c_s^*)$. \\
\item Replace the constraint $c_r = c_r^*$ (which was added in Step~2) by the constraint $c_s < c_s^*$.\\
\item Go to Step 1.
\end{enumerate}
\end{algorithm}

\begin{theorem}
All the cost pairs in the Pareto frontier can be listed by Algorithm~\ref{alg:paretofront} in finite time.
\end{theorem}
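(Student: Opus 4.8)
The plan is to verify three things and combine them: the loop runs only finitely many iterations, every pair it outputs is Pareto-optimal, and every Pareto-optimal $B$-achievable pair is output. Write $(c_r^{(1)}, c_s^{(1)}), (c_r^{(2)}, c_s^{(2)}), \ldots$ for the pairs printed in successive executions of Step~2, so that round~$t$ carries out Steps~1 and~2 subject to the original ILP constraints \eqref{eq:degree}--\eqref{eq:beta_integer} together with the cuts $c_s < c_s^{(1)}, \ldots, c_s < c_s^{(t-1)}$ installed by earlier executions of Step~3 (no cut in round~$1$). Since the solution produced in round~$t$ must satisfy the cut $c_s < c_s^{(t-1)}$, we get $c_s^{(1)} > c_s^{(2)} > \cdots$ at once. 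The key lemma I would then isolate is that $c_s^{(t)}$ is in fact the minimum of $c_s$ over \emph{all} $B$-achievable pairs with $c_r = c_r^{(t)}$, not merely those respecting the cuts: an achievable pair with $c_r = c_r^{(t)}$ and strictly smaller $c_s$ would have $c_s < c_s^{(t)} < c_s^{(t-1)} < \cdots < c_s^{(1)}$, hence would satisfy every active cut, contradicting the Step~2 minimization. The lemma in turn forces $c_r^{(t)} > c_r^{(t-1)}$: any achievable pair with $c_r = c_r^{(t-1)}$ has $c_s \geq c_s^{(t-1)}$ by the lemma applied at round~$t-1$, so none survives the cut $c_s < c_s^{(t-1)}$ active in round~$t$, and Step~1 of round~$t$ is forced to raise $c_r$ strictly.

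For termination, every printed pair is realized by a feasible ILP solution (the Step~2 optimizer of its round) and is therefore $B$-achievable; by the strict increase $c_r^{(1)} < c_r^{(2)} < \cdots$ these pairs are pairwise distinct, and since the set of $B$-achievable pairs is finite (Theorem~\ref{th:finite_front}), Step~1 must eventually return no solution and the loop halts. For Pareto-optimality of $(c_r^{(t)}, c_s^{(t)})$, suppose an achievable $(c_r, c_s)$ dominates it, i.e. $c_r \leq c_r^{(t)}$, $c_s \leq c_s^{(t)}$, with at least one inequality strict. Then $c_s \leq c_s^{(t)} < c_s^{(t-1)} < \cdots$, so $(c_r, c_s)$ is feasible for round~$t$'s Step~1; minimality of $c_r^{(t)}$ gives $c_r = c_r^{(t)}$, and then the key lemma gives $c_s = c_s^{(t)}$, contradicting strict domination.

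For completeness, let $(\hat c_r, \hat c_s)$ be a Pareto-optimal pair and let $m$ be the index of the last round. Halting means no achievable pair satisfies $c_s < c_s^{(m)}$, so $\hat c_s \geq c_s^{(m)}$; hence, reading $c_s^{(0)} = +\infty$, there is a round $t \in \{1, \ldots, m\}$ with $c_s^{(t)} \leq \hat c_s < c_s^{(t-1)}$. Then $(\hat c_r, \hat c_s)$ is feasible for round~$t$'s Step~1, so $\hat c_r \geq c_r^{(t)}$; were this strict, $(c_r^{(t)}, c_s^{(t)})$ would dominate $(\hat c_r, \hat c_s)$, since it has $c_r^{(t)} < \hat c_r$ and $c_s^{(t)} \leq \hat c_s$, contradicting Pareto-optimality; so $\hat c_r = c_r^{(t)}$, and the key lemma then gives $\hat c_s = c_s^{(t)}$, so the pair was printed. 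The step I expect to require the most care is the bookkeeping of the accumulating cuts, and in particular proving the ``global minimality'' lemma for $c_s^{(t)}$ and the strict monotonicity $c_r^{(t)} > c_r^{(t-1)}$; once those are pinned down, termination, soundness, and completeness each reduce to a one- or two-line domination argument.
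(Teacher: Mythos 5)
Your proposal is correct and verifies the same mechanism as the paper; the difference is organizational rather than conceptual. The paper orders the Pareto frontier as $(a_1,b_1),\ldots,(a_M,b_M)$ with $a_1<\cdots<a_M$, $b_1>\cdots>b_M$, and argues by induction that the $i$-th round outputs exactly $(a_i,b_i)$; you never pre-order the frontier, but instead isolate a ``global minimality'' lemma for $c_s^{(t)}$, deduce strict monotonicity of the printed $c_r$-values, and split the claim into termination, soundness, and completeness, with completeness handled by the sandwich $c_s^{(t)}\le \hat c_s<c_s^{(t-1)}$. This buys a cleaner separation of concerns (and works whether the cuts accumulate or only the newest one is kept, since the newest cut dominates), at the price of more bookkeeping than the paper's direct induction. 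Two spots need tightening. First, in your proof of the key lemma (and again where you derive $c_r^{(t)}>c_r^{(t-1)}$): a witness of an achievable pair $(c_r^{(t)},c_s)$ with $c_s<c_s^{(t)}$ is only guaranteed to have repair cost \emph{at most} $c_r^{(t)}$, because achievability is defined with inequalities; if that inequality is strict, the witness does not violate the Step~2 problem, whose added constraint is the equality $c_r=c_r^{(t)}$ — it instead contradicts the Step~1 minimality of $c_r^{(t)}$, since it satisfies all active cuts. One sentence invoking both minimizations (or the upward-closure of the achievable set) closes this. Second, your termination step cites Theorem~\ref{th:finite_front} as saying the set of $B$-achievable pairs is finite; it is not (that set is upward closed, hence infinite) — the theorem asserts finiteness of the Pareto frontier. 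Since you have already shown the printed pairs are pairwise distinct and Pareto-optimal, finiteness of the frontier (or of the finitely many feasible ILP solutions, which is what the proof of Theorem~\ref{th:finite_front} actually uses) yields termination directly.
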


\begin{proof}
By Theorem~\ref{th:finite_front}, there is a finite number of Pareto-optimal $B$-achievable cost pairs. Denote them by $(a_1, b_1), (a_2, b_2), \ldots, (a_M, b_M)$, where $M$ is the cardinality of the Pareto frontier. Furthermore, let them be ordered so that $a_i < a_j$ and $b_i > b_j$ for $i < j$.

We claim that the first point output by Algorithm~\ref{alg:paretofront} is $(a_1, b_1)$. To see this, note that it first minimizes $c_r$, without any constraint on $c_s$. The result so obtained must be $c_r^* = a_1$, for otherwise $(a_1,b_1)$ would not be the first point in the Pareto frontier. Then in Step~2, it minimizes $c_s$, with the constraint $c_r = c_r^* = a_1$. The result $c_s^*$ must be less than or equal to $b_1$, since $(a_1, b_1)$ is $B$-acheivable. Moreover, $c_s^*$ cannot be strictly less than $b_1$, for otherwise $(a_1, b_1)$ is not Pareto-optimal. As a consequence, the first point $(a_1, b_1)$ is output.

By the same argument, we can see that all points output by Algorithm~\ref{alg:paretofront} must be Pareto-optimal, and thus belong to the Pareto frontier. Assume the pair $(a_i, b_i)$ has just been output. Algorithm~\ref{alg:paretofront} first minimizes $c_r$, with the constraint $c_s < b_i$. The result so obtained must be $c_r^* > a_i$, for otherwise, $(a_i, b_i)$ is not a Pareto-optimal point. It must be equal to $a_{i+1}$, for otherwise $(a_{i+1}, b_{i+1})$ is not in the Pareto frontier. Next, Algorithm~\ref{alg:paretofront} minimizes $c_s$, with the constraint $c_r = a_{i+1}$. The result will then be $c_s^* = b_{i+1}$. Therefore, the next Pareto-optimal pair $(a_{i+1}, b_{i+1})$ is output.

As a result, all points in the Pareto frontier will be output. The algorithm terminates when no more Pareto-optimal points can be found.
\end{proof}

We remark that Algorithm 2 cannot be replaced by solving a family of weighted sum minimization problems (with different weights), since not all Pareto optimal cost pairs lie on the boundary of the convex hull of all achievable cost pairs.

\section{A Heuristic Solution} \label{sec:heuristic}

Our repair cost minimization problem is a joint repair overlay, retrieval sets, and block assignment optimization problem. In theory, it can be solved by ILP. For large network size, however, ILP is too time consuming due to the fast-growing problem dimension. In this section, we present an efficient heuristic to solve the problem.

Our heuristic algorithm is divided into the following three steps:
\begin{enumerate}
\item Determine the repair overlay, $\tau$, or equivalently, the overlay selection vector $\mathbf{x}$.
\item Determine the collection of retrieval sets, $\Psi$, or equivalently, the retrieval set selection vector, $\mathbf{y}$.
\item Determine the block assignment vector, $\mathbf{b}$.
\end{enumerate}

First, we determine the repair overlay $\tau = (\mathcal{V}, \mathcal{H}^\tau)$ by a greedy approach. We examine all $\binom{n}{\rho+1}$ possible hyperedges that could be put into $\mathcal{H}^\tau$. Let them be $\tilde{E}_1, \tilde{E}_2, \ldots, \tilde{E}_{\binom{n}{\rho+1}}$. Let $G_i$ be the subgraph of the metric closure $G$ induced by the vertices in $\tilde{E}_i$. The cost of the minimum spanning tree of $G_i$ is called the {\em MST weight} of $\tilde{E}_i$. At each step, we choose an hyperedge, not previously chosen, with the smallest MST weight while obeying the degree constraint. Such an hyperedge is then added to $\mathcal{H}^\tau$. The procedure then repeats. We formally state our method as Algorithm~\ref{alg:repairoverlay}.

\begin{algorithm}
\caption{Find a repair overlay $\tau$}
\label{alg:repairoverlay}
\KwIn{$\tilde G = (\mathcal{V}, \tilde{\mathcal{E}}), d, \rho$}
\KwOut{$\tau=(\mathcal{V}, \mathcal{H}^{\tau})$}
\BlankLine

\begin{enumerate}
\item Compute the metric closure $G$ of $\tilde G$.
\item Initialize $\tau$ with $\mathcal{H}^{\tau} \gets \emptyset$, and $N^{\tau}_v \gets 0$ for all $v \in \mathcal{V}$.
(Note that $N^{\tau}_v$ represents the degree of vertex $v$ in $\tau$.)
\item Sort all the $(\rho+1)$-subsets of $\mathcal{V}$ in ascending order of MST weight and get the sequence $\tilde{E}_1, \tilde{E}_2, \ldots, \tilde{E}_{\binom{n}{\rho+1}}$.
\item {\bf for} $i=1$ to $\binom{n}{\rho+1}$ {\bf do} \\
\hspace{4mm} {\bf if} $N^{\tau}_v < d$ for all $v\in \tilde{E}_i$ {\bf then} \\
\hspace{10mm} $\mathcal{H}^{\tau} \gets \mathcal{H}^{\tau} \cup \{\tilde{E}_i\}$  \\
\hspace{10mm} $N^{\tau}_v \gets N^{\tau}_v+1$ for all $v \in \tilde{E}_i$ \\
\hspace{4mm} {\bf end} \\
{\bf end}
\item Return $\tau=(\mathcal{V}, \mathcal{H}^{\tau})$. \\
\end{enumerate}
\end{algorithm}

Let $m$ be the number of edges in the underlying storage network $\tilde{G} = (\mathcal{V},\tilde{\mathcal{E}})$. The complexity of Algorithm~\ref{alg:repairoverlay} is $O(n^2\log n + mn + {\rho}n^{\rho+1}(\rho + \log n))$, since computing the metric closure $G$ of $\tilde{G}$ by Johnson's algorithm has the complexity of $O(n^2 \log n + mn)$ , finding the MST weight of the $\binom{n}{\rho+1}$ hyperedges has the complexity of $O({\rho}^2 n^{\rho+1})$, the sorting in the second step has complexity of $O(\rho n^{\rho+1}\log n)$ and the third step has complexity of $O(n^{\rho+1})$. Since $\rho \geq 1$, the complexity of Algorithm~\ref{alg:repairoverlay} can be simplified as $O(mn + {\rho}n^{\rho+1}(\rho + \log n))$.

Next, we need to find $\Psi$, given a fixed repair overlay $\tau$ obtained in the previous step. According to the structure of the MDS-IFR code, we know that the coded blocks associated with two different hyperedges are distinct. For this reason, we require the $k$ storage nodes in a retrieval set jointly hit as many hyperedges of $\tau$ as possible. In other words, we require $\Psi$ be the collection of the first $w$ $k$-subsets of $\mathcal{V}$ that hit the maximum number of hyperedges of $\tau$. To find it, we use a recursive approach, which is formally stated as Algorithm~\ref{alg:retrievalsets}. Note that in Step~4 of Algorithm~\ref{alg:retrievalsets}, we use $u \bigoplus \Psi$, where $u$ is a vertex and $\Psi$ is a collection of vertex sets, to denote the operation of adding $u$ to each set in $\Psi$. For example, $v_1 \oplus \{ \{v_2, v_4\}, \{v_3, v_6\}\} = \{\{v_1, v_2, v_4\}, \{v_1, v_3, v_6\}\}$.

We remark that Algorithm~\ref{alg:retrievalsets} is for the case where $w_1=0$. If there are $w_1>0$ pre-determined retrieval sets, Algorithm~\ref{alg:retrievalsets} can be applied with a very minor modification. Before a $k$-subset of $\mathcal{V}$ is added to the collection of retrieval sets $\Psi$, the algorithm first check whether that $k$-subset happens to be one of the pre-determined retrieval sets. It would be added if and only if it is not one of them. The procedure repeats until $w-w_1$ retrieval sets have been added to $\Psi$.

\begin{algorithm}
\caption{Find a collection of retrieval sets $\Psi$=RS$(\mathcal{V}, \mathcal{H}^{\tau}, k, w)$}
\label{alg:retrievalsets}
\KwIn{$\tau = (\mathcal{V}, \mathcal{H}^{\tau}), k, w$}
\KwOut{$\Psi$}
\BlankLine

\begin{enumerate}
\item {\bf if } $(\mathcal{V}=\emptyset) \bigvee (|\Psi|=w)$ {\bf do} \\
\hspace{4mm} return $\Psi$ \\
{\bf end}
\item Find $u \in \mathcal{V}$ that hits the maximum number of hyperedges in $\mathcal{H}^{\tau}$. \\
\item Let ${\tau'}=(\mathcal{V}', \mathcal{H}^{\tau'})$ be the hypergraph obtained from ${\tau}=(\mathcal{V}, \mathcal{H}^{\tau})$ by removing $u$ from $\mathcal{V}$ and all hyperedges containing $u$ from $\mathcal{H}^{\tau}$. \\
\item $\Psi \gets u \bigoplus \text{RS}(\mathcal{V}', \mathcal{H}^{\tau'}, k-1, w)$ \\
\item {\bf if} $|\Psi|<w$ {\bf do} \\
\hspace{4mm} $\Psi \gets \Psi \bigcup \text{RS}(\mathcal{V}', \mathcal{H}^{\tau'}, k, w-|\Psi|)$ \\
{\bf end}
\item Return $\Psi$
\end{enumerate}
\end{algorithm}

The complexity of Step 2 in Algorithm~\ref{alg:retrievalsets} is $O(n|\mathcal{H}^{\tau}|)$, which is the same as $O(\frac{n^2 d}{\rho+1})$, since $|\mathcal{H}^{\tau}| \leq \frac{nd}{\rho+1}$.
Step 2 would be implemented $k$ times to find a retrieval set containing $k$ vertices, and Algorithm~\ref{alg:retrievalsets} needs to find $w$ retrieval sets. Thus, the complexity of Algorithm~\ref{alg:retrievalsets} is $O(\frac{wkn^2 d}{\rho+1})$.

Last, we need to find $\mathbf{b}$, given a fixed repair overlay $\tau$ and a fixed collection of retrieval sets $\Psi$. This can be done by solving the ILP problem while fixing $\mathbf{x}$ and $\mathbf{y}$ to the values corresponding to $\tau$ and $\Psi$, respectively. Alternatively, we can solve the LP problem by relaxing the integer constraint on ${\bf b}$ if we want to minimize the asymptotically achievable cost.

For our ILP formulation of the repair cost minimization problem, the number of variables is $2\binom{n}{\rho+1} + \binom{n}{k}$ and the number of constraints is $\binom{n}{\rho+1} + n + w + 2$. If the repair overlay and the collection of retrieval sets are fixed, the number of variables can be reduced to $|\mathcal{H}^{\tau}|$ while the number of constraints can be reduced to $|\mathcal{H}^{\tau}| + w + 1$. Since $|\mathcal{H}^{\tau}| \leq nd/ (\rho+1)$ and $d \leq n-\rho+1$, the number of variables and constraints of the ILP problem can be reduced from $O(n^{\rho+1} + n^k)$ and $O(n^{\rho+1})$, both to $O(n^2)$.

To conclude, the heuristic method consists of three steps. The first two steps have complexities $O(mn + {\rho}n^{\rho+1}(\rho + \log n))$ and $O(\frac{wkn^2d}{\rho+1})$, respectively. For practical scenarios, $\rho$ is a small constant, typically equal to 1 or 2.
In theory, linear programming can be solved in polynomial time. Therefore, regarding $\rho$ as a constant, the overall computational complexity of the heuristic method is polynomial in~$n$.

\vspace{5mm}

{\bf Example}: Consider a 5-node ring, $\tilde{G}$, shown in Fig.~\ref{fi:HeurExample}$(a)$. The number associated with an edge denotes the single-hop cost between its two endpoints. Fig.~\ref{fi:HeurExample}$(b)$ shows $G$, the metric closure of $\tilde{G}$, where the number associated with an edge is the corresponding communication cost. Suppose the storage network is able to tolerate double failures, i.e., $\rho=2$, and the degree constraint is $d=3$. We need to consider all hyperedges whose cardinality is equal to $\rho+1 = 3$, i.e., $\tilde E_1=\{1, 2, 3\}, \tilde E_2=\{3, 4, 5\}, \tilde E_3=\{1, 2, 5\}, \tilde E_4=\{2, 3, 4\}, \tilde E_5=\{1, 2, 4\}, \tilde E_6=\{1, 3, 4\}, \tilde E_7=\{1, 4, 5\}, \tilde E_8=\{2, 3, 5\}, \tilde E_9=\{2, 4, 5\}, \tilde E_{10}=\{1, 3, 5\}$. Their MST weights are 5, 5, 6, 6, 7, 7, 8, 9, 9, 10, respectively. According to Algorithm~\ref{alg:repairoverlay}, the hyperedges $\tilde E_1, \tilde E_2, \tilde E_3, \tilde E_4$ and $\tilde E_7$ are successively added into $\mathcal{H}^{\tau}$. The resulting repair overlay $\tau$ is shown in Fig.~\ref{fi:HeurExample}$(c)$. Furthermore, suppose that $k=3$ and $w=6$. According to Algorithm~\ref{alg:retrievalsets}, we can obtain a collection of retrieval sets $\Psi=\{R_1=\{1,3,2\}, R_2=\{1,3,4\}, R_3=\{1,3,5\}, R_4=\{1,4,2\}, R_5=\{1,4,5\}, R_6=\{1,2,5\}\}$.

\begin{figure}
  \centering
  \scalebox{0.6}{\includegraphics{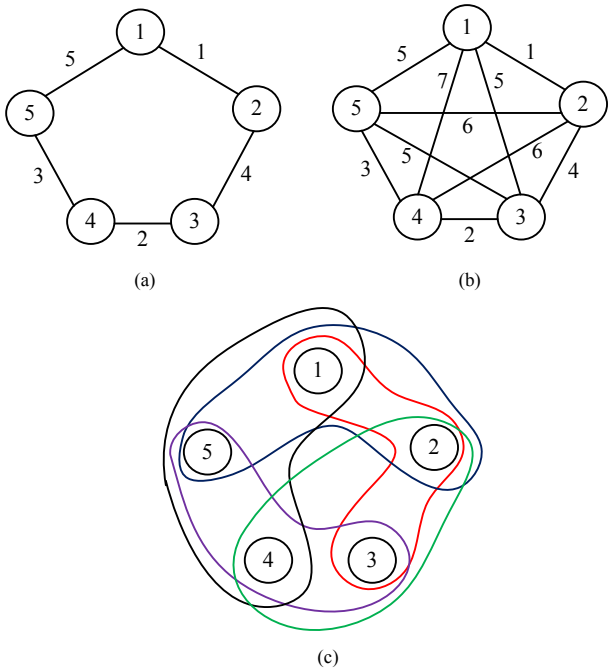}}
  \caption{An example of finding a repair overlay $\tau$ and a collection of retrieval sets $\Psi$ in a given graph $\tilde G$.}
  \label{fi:HeurExample}
\end{figure}

\section{Simulation Results} \label{sec:simulation}
In this section, we consider heterogeneous storage systems. We compare the optimal tradeoff between system storage cost and system repair cost that can be achieved by the MDS-IFR code with that achieved by the regenerating code. Moreover, we compare the minimum system repair cost that can be achieved by the MDS-IFR code with that achieved by the regenerating code for different network size. Here, we use the term ``regenerating code'' to refer to any code that achieve points on the tradeoff curve under the regular model.

In our simulations, both the storage cost vector ${\bf s}=[s_i]$ and the single-hop cost matrix $\tilde{\bf C}=[\tilde{c}_{ij}]$ are randomly generated. For the storage system whose size is less than or equal to $20$, each entry in ${\bf s}$ and $\tilde{\bf C}$ is an integer selected from the uniform distribution on the interval $[0, 50]$. For the storage system whose size is larger than $20$, each entry in ${\bf s}$ and $\tilde{\bf C}$ is an integer selected from the uniform distribution on the interval $[0, 100]$. We assume that the probabilities of occurrence of all repairable failure patterns are the same.

Consider a distributed storage system with parameters: $n=10$, $\rho=2$, $d=3$, $k=3$. For the MDS-IFR code, all Pareto-optimal $B$-achievable cost pairs $(c_s^*, c_r^*)$ can be obtained by running Algorithm~\ref{alg:paretofront} and solving the corresponding ILP problems. The curve connecting all Pareto-optimal $B$-achievable cost pairs is the optimal tradeoff between system storage cost and system repair cost that can be achieved by the MDS-IFR code, as shown in Fig.~\ref{fi:pareto}. For the regenerating code, there exists a fundamental tradeoff between the storage amount per node, $\alpha$, and the amount of data downloaded from each surviving node when repairing a failed node, $\beta$. Based on the tradeoff between $\alpha$ and $\beta$, if each newcomer downloads data along the $d$ paths with the least communication costs, the optimal tradeoff between system storage cost and system repair cost can be obtained. From Fig.~\ref{fi:pareto}, it can be seen that, compared with the regenerating code, under the same data retrieval requirement, i.e., $w=\binom{n}{k}=\binom{10}{3}=120$, the system repair cost that achieved by the MDS-IFR code can be reduced if the system storage cost are increased. However, if the data retrieval requirement are properly relaxed, i.e., $w=10$, both the system repair cost and system storage cost achieved by the MDS-IFR code can be reduced.

For the heuristic of minimizing repair cost in the irregular model by using the MDS-IFR code, to illustrate the integrality gap, we consider a distributed storage system with parameters: $n=10, d=6, k=4$, $w=\binom{10}{4}$, and $\rho=1$. We increase the data object size $B$ from $10$ to $30$, with step size $10$. For each value of $B$, we increase the maximum system storage cost per unit data object, $C_s$, from $80$ to $100$ and solve the corresponding ILP. The tradeoff curves between system storage cost, $c_s$, and system repair cost, $c_r$, are plotted in Fig.~\ref{fi:IPvsLP}. We can observe that the gap between the solution of the ILP and of its relaxation is tiny and decreases with the growing of the data object size $B$. Thus, to improve the efficiency of simulation, we solve the LP problem by relaxing the integer constraints on ${\bf b}$ to minimize the asymptotically achievable repair cost in our simulation.

We next compare the minimum system repair cost of the MDS-IFR code with that of the regenerating code for different network size. The maximum allowable system storage cost per unit data object $C_s$ is set to a sufficiently large value, $1000000$. The simulation for each value of $n$ is averaged over $100$ runs. For small storage networks, from Fig.~\ref{fi:cr_vs_n}, it can be seen that if the number of retrieval sets is equal to $\binom{n}{k}$, the minimum system repair cost that can be achieved by the MDS-IFR code is roughly reduced at least by $20 \%$. Moreover, the asymptotically achievable minimum system repair cost found by our heuristic is near-optimal. The gap between heuristic solution and optimal solution is at most $6 \%$. If the data retrieval requirement is relaxed, for example the number of retrieval sets is reduced to $w=50$, the asymptotically achievable minimum system repair cost achieved by the MDS-IFR code can be reduced at least by $70 \%$. Since the constraints of data repair and data retrieval are relaxed in the irregular model, it is not surprising that there exists a performance gain. Nevertheless, it demonstrates that there is a large room for improvement if the regular model is refined. This is particularly relevant when the networking environment is heterogeneous.

To gain more understanding about the computational efficiency of our heuristic, we increase the network size and measure its running time. The machine employed for simulation is a Dell computer with an Intel(R) Core(TM)2 Quad CPU running at 3 GHz with 4 GB RAM. The operating system is Windows 7, and the computer is a 32-bit machine. The simulation programs were written in MATLAB. Our method requires solving LP and ILP problems. These tasks were done by a free linear integer programming solver called ``lp\underline{~}solve'', which was called from our MATLAB program. In our simulation, the system parameters are set as follows: $d=5, k=4, \rho=2, w=100, Cs=1000000$ and $B=50$. The simulation for each value of $n$ is averaged over $100$ runs. The minimum system repair cost obtained by using our heuristic for different network size is shown in Fig.~\ref{fi:cr_vs_n_large}. The average running time of the three steps of our heuristic for a given problem instance is also recorded in Fig.~\ref{fi:time}. From Fig.~\ref{fi:time}, it can be seen that the most time consuming step of our heuristic is solving the LP problem after determining the repair overlay and retrieval sets. Moreover, the total time consumed by our heuristic is less than $4$ minutes when the network size is less than $150$.

\begin{figure}
  \centering
  \includegraphics[width=9cm,height=7cm]{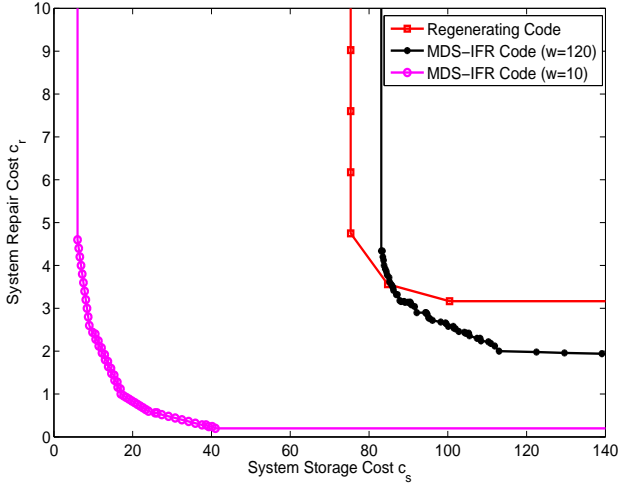}
  \caption{Optimal tradeoff between system storage cost and system repair cost ($n=10, d=4, k=3, \rho=2, \text{and}~B=20$).}
  \label{fi:pareto}
\end{figure}

\begin{figure}
  \centering
  \includegraphics[width=9cm,height=7cm]{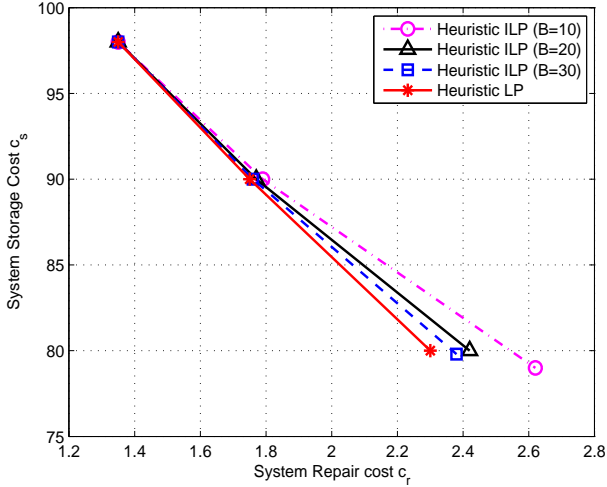}
  \caption{Tradeoff curves between system storage cost and system repair cost ($n=10, d=6, k=4, w=\binom{10}{4}=210, \text{and}~\rho=1$).}
  \label{fi:IPvsLP}
\end{figure}

\begin{figure}
  \centering
  \includegraphics[width=9cm,height=6cm]{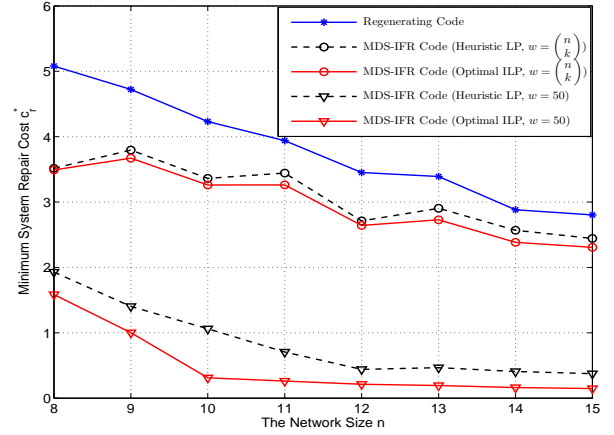}
  \caption{The minimum system repair cost for different network size ($d=4, k=3, \rho=2, \text{and}~B=30$).}
  \label{fi:cr_vs_n}
\end{figure}

\begin{figure}
  \centering
  \includegraphics[width=9cm,height=7cm]{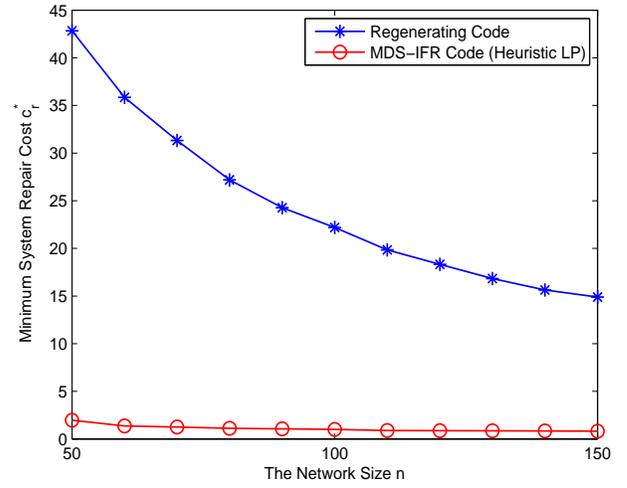}
  \caption{The minimum system repair cost for different network size ($d=5, k=4, \rho=2, w=100, \text{and}~B=50$).}
  \label{fi:cr_vs_n_large}
\end{figure}

\begin{figure}
  \centering
  \includegraphics[width=9cm,height=7cm]{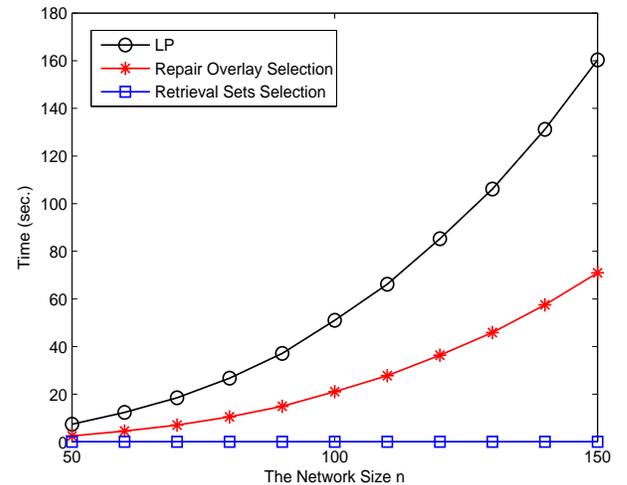}
  \caption{The average running time of the three steps of heuristic for different network size ($d=5, k=4, \rho=2, w=100, \text{and}~B=50$).}
  \label{fi:time}
\end{figure}

\newpage
\section{Conclusion} \label{sec:conclusion}

Due to the emergence of heterogeneous cloud storage systems, we generalize the concept of the FR code and propose the IFR code. A key property of the FR code is its uncoded repair process. This simple repair mechanism minimizes the repair bandwidth and the disk access bandwidth simultaneously, without any computational cost. The IFR code preserves this nice property. Moreover, its irregular structure allows the repair pattern and the storage amount of each node to be different, thus enabling the cloud system to be optimized according to network heterogeneity including different storage costs of the storage nodes and different communication costs of the links. To determine the repair pattern, which we call the repair overlay, and the storage allocation, we formulate the whole problem based on a new irregular model, with the aim of minimizing the system repair cost by properly designing the MDS-IFR code and the retrieval sets. For large networks, we decompose the repair cost minimization problem into three subproblems: repair overlay selection, retrieval sets selection, and block assignment, and propose a heuristic solution. For small network sizes, it is shown to be nearly optimal by comparing it with the optimal ILP method.

While the optimization framework established in this paper concerns mainly on system repair cost, it can be modified to include other system objectives and extended by incorporating more resource constraints. On the other hand, as it is based on the MDS-IFR code, it provides very low repair cost at the expense of higher storage overhead. If higher storage efficiency is needed in some applications, other codes will be needed
(using at the expense of higher repair cost or computing cost). This problem is beyond the scope of this paper. Nevertheless, we have demonstrated
how optimization techniques can be used to construct good codes, providing insights and new methodology on how to design future heterogeneous cloud
storage systems.

\section*{Acknowledgement}

The authors would like to thank the anonymous reviewers for their constructive comments and suggestions to improve the quality of the paper.

\vfill

\end{document}